\newcommand{\ourWorkName}{SLW-Graph}
\newcommand{\graph}{\emph{\ourWorkName}}
\newcommand{\graphs}{\emph{\ourWorkName}s}
\newcommand{\ourSystemName}{Brook-2PL}
\newcommand{\ourSystem}{\emph{\ourSystemName}}
\newcommand{\cat}[1]{\smallskip\noindent\textbf{#1.}}
\newcommand{\re}[1]{#1}
\newcommand{\two}[1]{#1}
\newcommand{\four}[1]{#1}
\newcommand{\five}[1]{#1}
\lstdefinelanguage{Transaction}{
    morekeywords=[1]{read, insert, update, delete, Read, Delete, Write, Insert, for, in},
    morekeywords=[2]{Transaction},
    morekeywords=[3]{AddListing, BuyListing, ReadItems},
    keywordstyle=[1]\color{blue},
    keywordstyle=[2]\color{violet},
    keywordstyle=[3]\bfseries\color{violet},
    comment=[l]{//},
    commentstyle=\color{gray}
}
\author{Farzad Habibi}
\affiliation{%
  \institution{University of California, Irvine}
}
\email{habibif@uci.edu}
\author{Juncheng Fang}
\affiliation{%
  \institution{University of California, Irvine}
}
\email{junchf1@uci.edu}
\author{Tania Lorido-Botran}
\affiliation{%
  \institution{Roblox, Northeastern University}
}
\email{tbotran@roblox.com}
\email{t.loridobotran@northeastern.edu}
\author{Faisal Nawab}
\affiliation{%
  \institution{University of California, Irvine}
}
\email{nawabf@uci.edu}
\def\@ACM@checkaffil{
    \if@ACM@instpresent\else
    \ClassWarningNoLine{\@classname}{No institution present for an affiliation}%
    \fi
    \if@ACM@citypresent\else
    \ClassWarningNoLine{\@classname}{No city present for an affiliation}%
    \fi
    \if@ACM@countrypresent\else
        \ClassWarningNoLine{\@classname}{No country present for an affiliation}%
    \fi
}
\begin{document}
\title{\ourSystemName: Tolerating High Contention Workloads with A Deadlock-Free Two-Phase Locking Protocol}

\begin{abstract}

\sloppy The problem of hotspots remains a critical challenge in high-contention workloads for concurrency control (CC) protocols.
Traditional concurrency control approaches encounter significant difficulties under high contention, resulting in excessive transaction aborts and deadlocks.
In this paper, we propose \ourSystem, a novel two-phase locking (2PL) protocol that (1) introduces \graph\ for deadlock-free transaction execution, and (2) proposes \emph{partial transaction chopping} for early lock release.
Previous methods suffer from transaction aborts that lead to wasted work and can further burden the system due to their cascading effects.
\ourSystem\ addresses this limitation by statically analyzing a new graph-based dependency structure called \graph, enabling deadlock-free two-phase locking through predetermined lock acquisition.
\ourSystem\ also reduces contention by enabling early lock release using partial transaction chopping and static transaction analysis. We overcome the inherent limitations of traditional transaction chopping by providing a more flexible chopping method.
Evaluation using both our synthetic online game store workload and the TPC-C benchmark shows that \ourSystem\ significantly outperforms state-of-the-art CC protocols. 
\ourSystem\ achieves an average speed-up of \re{\(2.86\times\)} while reducing tail latency (p95) by \re{\(48\%\)} in the TPC-C benchmark.


\vspace{-5pt}

\end{abstract}



\keywords{Concurrency Control, Two-Phase Locking, Deadlock-Free, OLTP}

\settopmatter{printfolios=true} 
\maketitle

\vspace{-5pt}

\section{Introduction}


The rapid growth of cloud computing provides an excellent infrastructure for data management, offering scalable and flexible resources to meet the growing demands of modern applications~\cite{jacobs2007ruminations}. Many cloud data management systems have been proposed and widely adopted~\cite{lakshman2010cassandra, chang2008bigtable, khetrapal2006hbase,baker2011megastore}. However, achieving both scalability and consistency under a high contention workload remains challenging~\cite{appuswamy2017analyzingUnderHighContntion, yu2014ccEvaluation}.

Transaction processing systems use advanced concurrency control protocols to leverage parallelism. However, these protocols still suffer from conflicts that reduce concurrency and throughput~\cite{bernstein1987concurrency, Tanabe2020ccbench, guo2021bamboo, wang2016ics3, bruhwiler2022analyzing, fang2022pelopartition}.
In highly contentious workloads, hotspots---frequently accessed small sets of database records---pose significant challenges~\cite{appuswamy2017analyzingUnderHighContntion, yu2014ccEvaluation}. To maintain strong isolation and serializability, concurrency control protocols must serialize transactions around these hotspots, even though they represent only a small fraction of total transactions. Under traditional concurrency protocols, transactions encountering hotspots may be forced to either wait or abort and restart, further impacting system performance~\cite{guo2021bamboo, Thomasian1998DistributedOC}.


To improve performance, prior production systems~\cite{larson2011high, diaconu2013hekaton, eldeeb2016ms_orleans} and research efforts~\cite{guo2021bamboo, wang2016ics3, ding2018improvingOCC, graefe2013controlledLockViolation, faleiro2017earlyWriteVisibility, kimura2012efficientLocking} have explored early write visibility through various methods. For instance, Bamboo~\cite{guo2021bamboo} modifies pessimistic locking in the 2PL protocol to allow locks to be released earlier, enhancing parallelism. Their solution involves tracking dirty reads and cascadingly aborting transactions if one transaction aborts.
However, this approach continues to experience transaction aborts under high contention, resulting in wasted work and degraded performance. In fact, cascading aborts become especially problematic in heavily contentious workloads, generating additional wasted work and further straining system resources.

Transaction chopping approaches~\cite{shasha1995transactionChopping}, such as IC3~\cite{wang2016ics3}, also enable early write visibility through static analysis of transactions to improve performance. 
However, transaction chopping is rigid in that isolated chops of a transaction follow their own concurrency control protocol and do not interact. This limits the flexibility of acceptable chopping configurations and thus makes them impractical in real applications. These approaches also continue to experience transaction aborts under high contention.

In this work, we propose \ourSystem, a deadlock-free 2PL protocol, to address the problem of hotspots in high-contention workloads by eliminating transaction aborts and enabling earlier lock release.

\ourSystem\ statically analyzes a highly contentious set of transactions using a specific dependency graph called \graph\ (Serializable Wait For Lock Graph). \ourSystem\ eliminates potential deadlocks by enforcing a predefined locking order within transactions and adjusting certain lock acquisitions to occur earlier in the transaction sequence.
Additionally, we show that static analysis of transactions and the removal of cyclic dependencies (deadlocks) can facilitate the early release of locks through a concept we call \emph{partial transaction chopping}. Traditional transaction chopping mechanisms~\cite{zhang2013transactionChain, shasha1995transactionChopping} are rigid and impose strict conditions that limit their flexibility in chopping transactions. We relax these assumptions, enabling more flexible and serializable choppings. The main insight of our mechanism is to allow chopping transactions in a \emph{partial}---rather than exclusive---manner.

In summary, this paper makes the following contributions: 

\begin{itemize}[leftmargin=*, topsep=1pt]
    \item We develop \ourSystem, a new deadlock-free concurrency control protocol that eliminates transaction aborts and significantly reduces wasted work, improving the parallelism of transactions.
    
    \item We propose \graph, a graph-based representation of transactions, and introduce lock manipulation techniques to eliminate potential deadlocks.
    
    \item We combine \graph\ with transaction chopping to enable early lock release, enhancing performance by increasing parallelism. Our approach to write visibility surpasses others by identifying more opportunities for early write visibility based on preprocessing and transaction analysis. Transaction chopping has inherent limitations, which we overcome by enabling a greater number of transactions to be chopped via \emph{partial} transaction chopping.

    \item We design a static analysis algorithm that leverages \graph\ and \emph{partial transaction chopping} to analyze a set of highly contentious transactions, generating optimized and deadlock-free transactions. Additionally, \ourSystem\ generalizes to all workloads by enabling dynamic transaction execution.

    \item We evaluate \ourSystem\ under both a synthetic online game store workload and the TPC-C benchmark~\cite{tpc-c}, comparing it with our baseline consisting of state-of-the-art concurrency control protocols. \ourSystem\ demonstrates an average speed-up of \re{\(2.86\times\)} while reducing latency by \re{\(48\%\)} in the TPC-C benchmark.

\end{itemize}

\vspace{-5pt}
\cat{Paper Organization}
The remainder of this paper is organized as follows: Section~\ref{sec:Background} provides motivation and background information on database transactions, two-phase locking (2PL), and transaction chopping. Section~\ref{sec:Design} introduces the design of \ourSystem, including static analysis of transactions using \graph\ and our proposed \emph{partial transaction chopping} method, as well as our approach to handle dynamic transactions. Section~\ref{sec:Evaluation} presents experimental results evaluating \ourSystem. Section~\ref{sec:Related_work} reviews relevant literature, and Section~\ref{sec:Conclusion} concludes the paper.



\vspace{-1em}
\section{Background and Motivation}
\label{sec:Background}

\subsection{Database Transactions}

A database transaction is a sequence of operations executed as a single unit of work on a database, ensuring the ACID (atomicity, consistency, isolation, and durability) properties. A transaction \( T \) can be defined as an ordered set of operations: \(
T = \{ o_1, o_2, \dots, o_n \}
\), where each \( o_i \) represents an operation (such as read or write) on data items within the database.



Serializability in a database system ensures concurrent execution of transactions by enforcing rules that prevent unexpected or incorrect results. Each transaction is treated as if it is the only transaction running, thereby maintaining data integrity. A concurrent schedule is considered serializable if it is equivalent to some sequential execution of its transactions. A stricter form, conflict serializability, holds when a schedule is conflict equivalent to a serial schedule—meaning it involves the same transactions and maintains the relative order of all conflicting operations.




\cat{Two-Phase Locking (2PL)}
%
Two-phase locking (2PL) is the most widely used class of concurrency control in database systems. In 2PL, reads and writes are synchronized through explicit locks: shared mode for read operations and exclusive mode for write operations.
A transaction can operate on a resource only if it acquires the necessary locks.
To maintain serializability, 2PL enforces two key rules~\cite{bernstein1987concurrency}:
(1) Conflicting locks are not allowed at the same time on the same data. Two locks conflict if they both lock the same resource and at least one of them is in exclusive mode.
(2) A transaction cannot acquire additional locks once it has released any.


%
Two-phase locking enforces a growing-then-shrinking lock acquisition rule to ensure serializability, but this can lead to deadlocks, mostly resolved through detection or prevention strategies.
In deadlock detection, the system maintains a central wait-for graph that tracks dependencies and periodically checks for cycles during runtime. However, maintaining this graph can create a scalability bottleneck~\cite{yu2014ccEvaluation}.
In deadlock prevention, the system allows only a subset of transactions to wait on locks based on specific criteria. Wound-Wait and Wait-Die are two popular deadlock prevention strategies within 2PL~\cite{rosenkrantz1978systemLevelConcurrency}.
In both deadlock handling mechanisms, the database system aborts a transaction to break the deadlock cycle, resulting in wasted CPU resources and failing to prevent the same deadlock scenario from reoccurring.

\subsection{Transaction Chopping}
\label{sec:Background_tc}


Transaction chopping breaks down a transaction \five{type} into smaller subtransactions, enabling the independent execution of each subtransaction. This independence allows locks to be held only during the execution of each piece. As a result, locks are released earlier, reducing contention and improving concurrency. This approach, referred to as piecewise execution, ensures that locks are held only during the execution of each subtransaction.

However, breaking down a transaction can pose challenges for maintaining serializability. To ensure serializability, transaction chopping relies on a graph called the SC-Graph. This graph verifies that the chopping and piecewise execution of subtransactions are valid and maintain a serializable execution order.

\cat{Example}
Consider a transaction \five{type} \( T = W(X)W(Y)R(Z) \) that accesses three different \five{object types (e.g. tables)}: \( X \), \( Y \), and \( Z \). This transaction \five{type} can be chopped into two subtransactions: \( T_{1} = W(X)W(Y) \) and \( T_2 = R(Z) \), allowing each subtransaction to release its locks after its execution is finished. To process the chain of subtransactions, the system will first execute \( T_1 \), followed by \( T_2 \).

  \begin{figure}
    \centering
    \includegraphics[width=\linewidth]{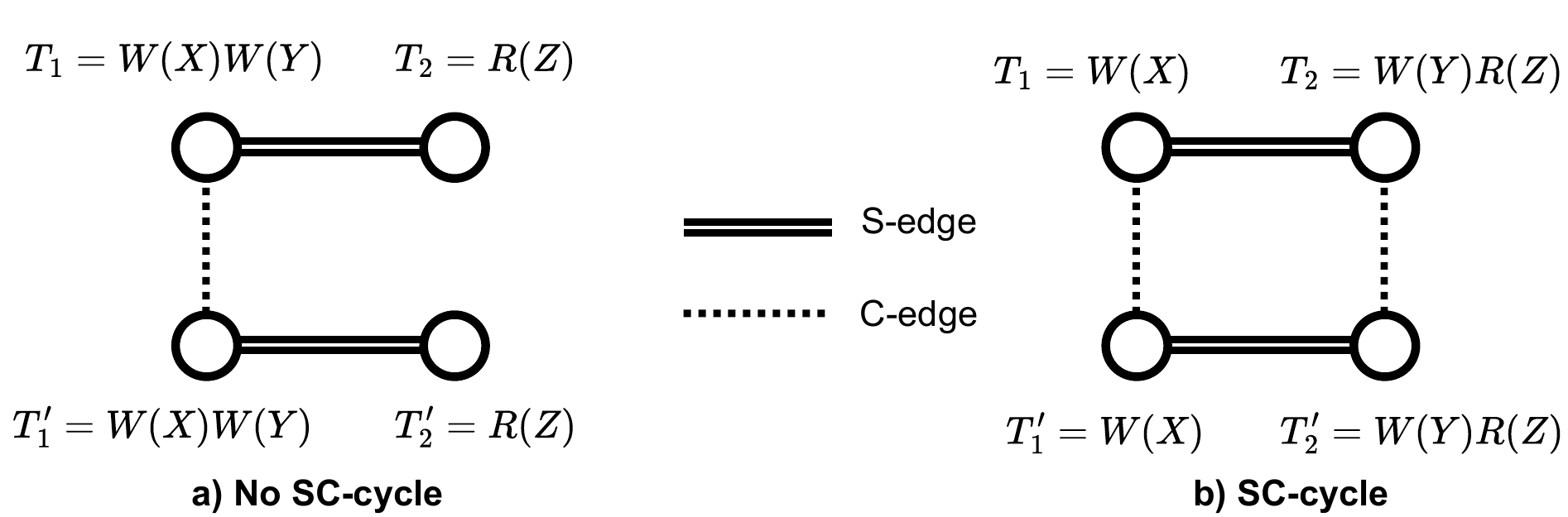}
    \vspace{-2.5em}
    \caption{SC-Graph of two different choppings for \( T = W(X)W(Y)R(Z) \): (a) Chopping into \( T_{1} = W(X)W(Y) \) and \( T_2 = R(Z) \), where the SC-Graph has no SC-cycle and is serializable; (b) Chopping into \( T_{1} = W(X) \) and \( T_2 = W(Y)R(Z) \), where the SC-Graph contains an SC-cycle and is not serializable.}

    \label{fig:sc-graph-background}
  \end{figure}

\cat{SC-Graph Representation}
In an SC-Graph, vertices represent subtransactions, and edges represent the relationships between these subtransactions. Vertices within the same chain are connected by (sibling) S-edges, while vertices in different chains that access the same data item, with at least one being a write operation, are connected by (conflict) C-edges.
\four{
In an SC-Graph, two instances of each transaction are included. This is necessary because the SC-Graph must capture scenarios where two separate, concurrent instances of the same transaction type conflict with each other. Including two instances in the SC-Graph allows these potential self-conflicts to be accurately modeled.
}

It has been proven that an SC-Graph without an SC-cycle (a cycle containing both S-edges and C-edges) guarantees serializability, even with piecewise execution~\cite{shasha1995transactionChopping}. 
Figure~\ref{fig:sc-graph-background}a shows the SC-Graph for subtransactions \( T_{1} = W(X)W(Y) \) and \( T_2 = R(Z) \). This chopping is serializable and valid as it does not create an SC-cycle. However, if the initial transaction \( T \) were instead chopped into \( T_{1} = W(X) \) and \( T_2 = W(Y)R(Z) \), an SC-cycle would be created, violating serializability, shown in Figure~\ref{fig:sc-graph-background}b.

The SC-graph of all involved transactions must have no SC-cycles, which limits the practicality of transaction chopping. This is because any practical application with more than a few transactions would typically contain an SC-cycle. One reason for this is that to have no SC-cycles, any pair of a decomposed transaction can only conflict on a single subtransaction.

\subsection{Why Deadlocks are Deadly?}

%
Deadlocks are problematic in the system for the following reasons.

\vspace{-1pt}
\cat{Deadlocks are Difficult to Handle}
Systems typically handle deadlocks through either prevention or detection. In deadlock prevention, systems abort transactions that could potentially create a deadlock, breaking the waiting cycle between transactions. However, this often results in aborting more transactions than necessary, leading to wasted computational resources and a loss of useful work. Additionally, most prevention mechanisms abort transactions mid-execution, which further wastes the progress made up to that point.

On the other hand, deadlock detection mechanisms present their own challenges, as they require the system to maintain a wait-for graph between transactions. This graph is costly to manage and imposes additional overhead on the system.

\vspace{-1pt}
\cat{Transaction Retrials are Problematic}
Deadlocks and transaction retrials can trigger a cascading effect, which exacerbates system instability and can lead to prolonged or even complete service outages~\cite{bronson2021metastable, huang2022metastable, habibi2024msf, habibi2024phd}. In a system with a continuous load of incoming transactions, retrials can accumulate, forming an artificial feedback loop that generates additional conflicts and further transaction retrials. This cycle intensifies contention within the system, resulting in degraded performance and work amplification.



\vspace{-5pt}

\section{\ourSystem\ Architecture}
\label{sec:Design}
In this section, we introduce the design of \ourSystem, which leverages our proposed graph representation, \graph\ (Section~\ref{sec:graph-defenition}), for the static analysis of transactions. 
Using \graph, \ourSystem\ achieves deadlock-free two-phase locking by preprocessing transactions, detecting potential sources of deadlocks, and providing strategies to eliminate them during execution (Section~\ref{sec:lock-manipulation}). 
Additionally, \ourSystem\ leverages a flexible transaction chopping~\cite{shasha1995transactionChopping} concept that we proposed, called \emph{partial transaction chopping} (Section~\ref{sec:partial-tx-chopping}), to enable the early release of locks without violating serializability.
\ourSystem\ provides a middleware algorithm (Section~\ref{sec:static-analysis}) to preprocess a static set of highly contentious transaction \five{types}. In \ourSystem, preprocessed transactions follow a basic 2PL protocol without requiring any deadlock-prevention mechanism. \ourSystem\ also supports the execution of dynamic transactions---transactions that have not been statically analyzed (Section~\ref{sec:DynamicTransactions}).

\vspace{-5pt}
\subsection{Intuition}

Assume there are only two types of transactions allowed in the system: \( T_1 = \text{W}(A) \text{W}(B) \) and \( T_2 = \text{W}(B) \text{W}(A) \). A possible deadlock scenario arises when an instance of transaction \( T_1 \) locks \five{a row of table \( A \)} while, concurrently, an instance of \( T_2 \) locks \five{a row of table \( B \)} to perform their respective write operations. \five{In this case, a deadlock may occur if both transactions attempt to access the same row held by the other.}
This happens because each transaction is waiting for the other to release its lock. One solution is to abort one of the transactions, but this leads to wasted CPU resources and does not prevent the same deadlock scenario from reoccurring.

A more effective solution is to ensure both transactions follow the same locking order for \five{tables \( A \) and \( B \)}. This approach eliminates the possibility of deadlocks and unnecessary aborts.

Additionally, deadlock-freedom enables early lock release, since transactions are guaranteed to commit without aborting from deadlocks. For example, in the scenario above, \( T_1 \) can release its lock on \five{row of table \( A \)} after the last operation on it, once it is certain the transaction will successfully commit.

However, in real-world systems, tracking all locks across many transactions is challenging. To address this, we propose \ourSystem, which enables the static analysis of transactions as a pre-processing step. By analyzing the \graph, potential deadlocks are identified, and elimination strategies are applied during a preprocessing phase. This ``abort-free'' property further allows for the early release of locks, reducing contention.
\ourSystem\ introduces \emph{partial transaction chopping} to fully exploit the benefits of early lock release. This technique builds upon the concept of early lock release and extends its potential by identifying additional opportunities within a transaction where locks can be safely released.


\vspace{-5pt}
\subsection{\ourSystem }

\ourSystem\ introduces a static pre-processing step for the two-phase locking (2PL) concurrency protocol. It converts a predefined set of transactions into a graph structure that is then analyzed prior to execution. This pre-processing analysis ensures deadlock-free operation during runtime, eliminating the need for transaction retrials and allowing for the early release of locks.

All statically analyzed transactions follow the 2PL protocol: locks are acquired during the growing phase, operations are executed, and locks are released during the shrinking phase. \ourSystem\ does not alter the core 2PL protocol for static transactions; rather, the changes lie in the order and timing of lock acquisition and release.
Systems can maintain their existing 2PL implementations while simply adjusting the order and timing in which locks are acquired and released. This allows for the seamless integration of our approach without modifying the underlying 2PL mechanism.

When dynamic transactions are enabled, they follow a modified 2PL to safely coexist with statically analyzed transactions. If only dynamic transactions are present, \ourSystem\ falls back to Wound-Wait 2PL. This design enables systems to retain their existing 2PL implementations while incorporating our minimal changes.

\subsection{\graph}
\label{sec:graph-defenition}
\graph\ represents transactions in a structured manner. Each node in the graph corresponds either to an operation or to a lock associated with an operation (referred to as a hop). Typically, a locking node precedes the operation nodes that modify the value of the locked variable. Within the same transaction, hops are linked sequentially by directed sibling edges (s-edges). When two locking hops from different chains conflict \five{by accessing the same table}, they are connected by undirected wait-for edges (w-edges).

%
A transaction is encoded as a sequence of hops ($[h_1 \cdots h_n]$), forming chains connected by s-edges. 
Additional transactions contribute further chains to the graph.

\cat{Representation} 
To formalize the \graph\ in graph-theoretic terms, for a set of transactions $T$, the corresponding \graph\ is defined as:
\( G = (V, E_s, E_w) \). Here, $V$ is the set of vertices or nodes, $E_s$ represents the set of directed sibling edges, and $E_w$ represents the set of undirected wait-for edges.

\cat{Hops}
Each vertex in the \graph\ represents a specific component of a transaction. The set of vertices $V$ is composed of four types of nodes:

\begin{enumerate}[leftmargin=*,topsep=0pt]
    \item Operation Nodes ($V_O$): Representing read/write operations within a transaction.
    \item Shared Locking Nodes ($V_{LS}$): Representing shared locking operations that precede read operations. 
    \item Exclusive Locking Nodes ($V_{LE}$): Representing exclusive locking operations that precede write operations.
    \item Unlock Nodes ($V_U$): Representing the nodes responsible for unlocking a set of locks held by the transaction. 
\end{enumerate}

Thus, the set of all locking nodes is defined as $V_L = V_{LE} \cup V_{LS}$, and the set of all vertices is defined as:
\( V = V_L \cup V_O \cup V_U \).

\cat{Sibling Edges}
Sibling edges ($E_s$) connect two nodes within the same transaction. If two hops $n$ and $n'$ are part of the same transaction, and one follows the other in execution order, they are connected via a directed s-edge.

\cat{Wait-for Edges}
Wait-for edges ($E_w$) represent potential conflicts in resource allocation between different transactions.
These edges connect two conflicting locking nodes associated with the same resource, where at least one of them is an exclusive lock. An undirected w-edge indicates a potential situation where one transaction waits for another to release a lock.

\five{
Static analysis cannot determine exactly what data items a hop accesses (which rows); therefore, we add a w-edge between two locking hops of different chains if the hops access the \emph{\textbf{same table}}.
Therefore, \ourSystem\ static analysis operates at the table level.
}
In \ourSystem\ implementation, programmers can provide annotations on the operations to specify that two nodes accessing the same table do not conflict, in which case the w-edge is removed. An example of such annotations is for operations that commute and, therefore, would not conflict even if accessing the same table.

\cat{Chains}
A chain is a sequence of hops \([h_1 \cdots h_n]\) (where \(h_i \in V\)), connected by s-edges and representing a transaction. Chains are interconnected via w-edges in the \graph.

\cat{SLW-cycle (Deadlocks)}
In the \graph, an SLW-cycle is defined as a \textit{directed} cycle that includes s-\textit{edges} ($E_s$), \textit{locking hops} ($V_{L})$, and \textit{w-edges}, with the constraint that no two w-edges can be adjacent. We prove (see Section \ref{proof}) that if no SLW-cycle exists in the \graph, deadlocks are impossible in runtime. Therefore, by eliminating SLW-cycles, we can guarantee a deadlock-free 2PL.

  \begin{figure}
    \centering
    \includegraphics[width=\linewidth]{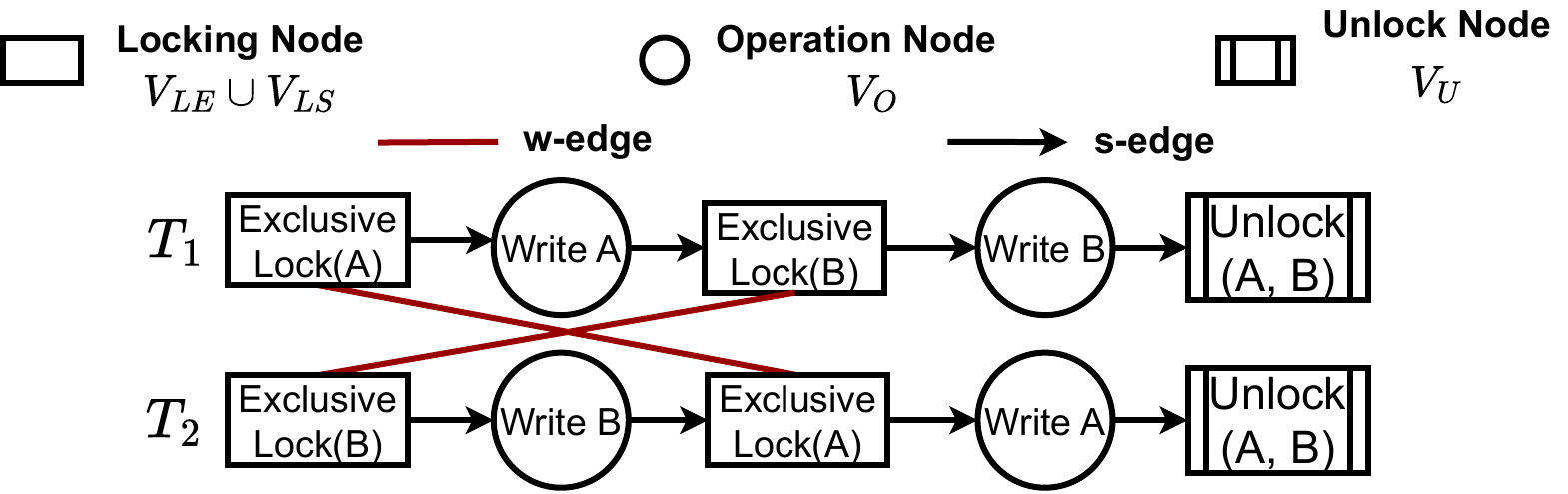}
    \vspace{-2em}
    \caption{\graph\ representation of $T_1= W(A) W(B)$ and $T_2= W(B) W(A)$ }
    \label{fig:slw}
  \end{figure}

\cat{Example}
Figure~\ref{fig:slw} shows an \graph\ representation of two transactions, $T_1= W(A) W(B)$ and $T_2= W(B) W(A)$. In this graph representation, there is an SLW-cycle, indicating a potential deadlock between these transactions where the lock for resource $A$ is held by $T_1$ and the lock for resource $B$ is held by $T_2$.


\begin{figure}
    \centering
    \includegraphics[width=0.85\linewidth]{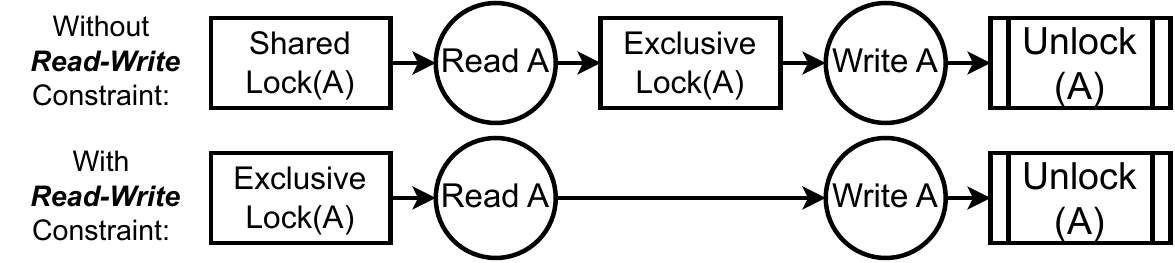}
    \vspace{-1em}
    \caption{\small The effect of the read-write constraint in \graph}
    \label{fig:slw-read-write}
\end{figure}

\cat{Read-Write Constraint}
Since \ourSystem\ performs static analysis on the \graph\ of transactions and can identify when a write follows a read, it provides an opportunity to combine both locks into a single node, reducing potential deadlocks caused by lock upgrades.
%
%
\four{In each transaction chain, if both read and write operations target the same resource \five{(same table)} within a single transaction type, \ourSystem\ applies a single exclusive lock for both in the \graph. This approach simplifies the graph structure and reduces the risk of deadlocks by avoiding shared lock conflicts.}


\begin{figure*}[t]
  \centering
    \begin{minipage}[b]{0.31\textwidth}
    \centering
    \includegraphics[width=0.9\linewidth]{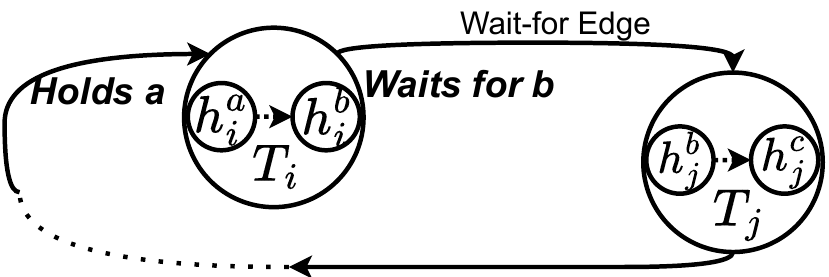}
    \vspace{-8pt}
    \caption{Breakdown of a potential wait-for-cycle resulting in an SLW-cycle}
    \label{fig:deadlock_prune_proof}
  \end{minipage}
      \hfill
   \begin{minipage}[b]{0.68\textwidth}
    \centering
    
    \begin{subfigure}[b]{0.55\textwidth}
        \centering
    \includegraphics[width=\linewidth]{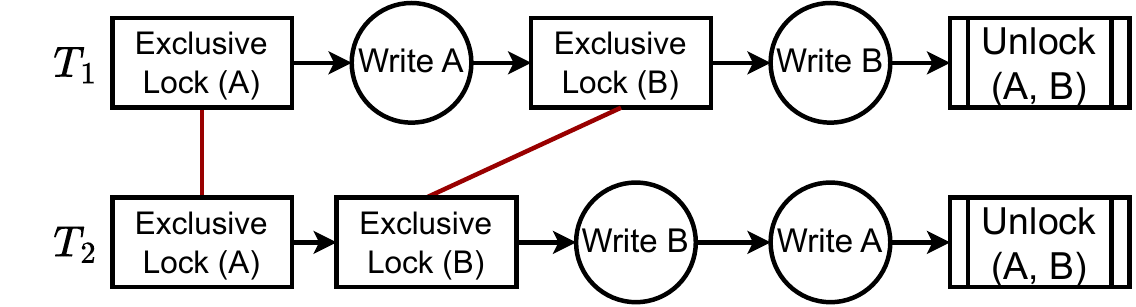}
    \caption{Lock movement strategy}
    \end{subfigure}
    \hfill
    \begin{subfigure}[b]{0.44\textwidth}
      \centering
    \includegraphics[width=\linewidth]{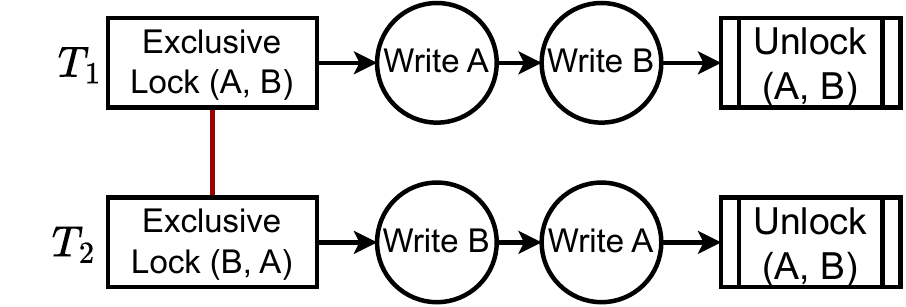}
    \caption{Lock combination strategy}
    \end{subfigure}
    \captionof{figure}{\small Lock manipulation techniques for eliminating SLW-cycles and preventing deadlocks}
    \label{fig:lock_manipulation}
   \end{minipage}

\end{figure*}

Figure~\ref{fig:slw-read-write} shows the effect of this constraint in the \graph\ of a transaction type \( T = \text{R}(A) \text{W}(A) \).
\four{
This transaction first performs a read on table A, followed by a write to \five{the same table}.  
In a typical 2PL execution, an instance of this transaction type would acquire a shared lock for the read and then upgrade to an exclusive lock before the write, often causing deadlocks.  
However, our static analysis detects this \texttt{R(A)→W(A)} access pattern replaces the locks with a single exclusive lock.
As a result, with this constraint applied, each instance of this transaction type acquires an exclusive lock on the resource before both the read and write operations.}

\vspace{-5pt}
\subsection{Correctness Proof}

\label{proof}

\begin{theorem}
A set of transactions is deadlock-free if, in their \graph, no SLW-cycle exists.
\end{theorem}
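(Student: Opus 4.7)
The plan is to prove the contrapositive: if a deadlock can arise at runtime among the transactions, then their \graph\ must contain an SLW-cycle. This is the natural direction because deadlocks in 2PL are characterized by cycles in the runtime wait-for graph, and our goal is to lift that runtime cycle into the static \graph.

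First, I would formalize what a deadlock means in this setting. A deadlock is a snapshot of execution in which there exist instances $T_1, T_2, \ldots, T_k, T_{k+1} = T_1$ such that, for each $i$, transaction $T_i$ currently holds a lock $L_i$ on some resource, and $T_i$ is blocked while attempting to acquire the lock held by $T_{i-1}$ (indices mod $k$). By the 2PL growing-phase rule, the locking hop in $T_i$ corresponding to $L_i$ has already executed, while the locking hop in $T_i$ corresponding to the lock it is waiting on comes strictly later in $T_i$'s code and is therefore connected to the former by a directed path of s-edges within $T_i$'s chain.

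Next I would construct the SLW-cycle explicitly. For each $i$, let $u_i$ be the locking hop in $T_i$'s chain at which $T_i$ acquired $L_i$, and let $v_i$ be the locking hop in $T_i$'s chain at which $T_i$ is attempting to acquire the lock on the resource held by $T_{i-1}$. Because $u_i$ and $v_{i+1}$ target the same resource and at least one of the two locks is exclusive (since the hold/wait relation only arises for conflicting modes), the definition of w-edges places an undirected w-edge between $u_i$ and $v_{i+1}$. Within each chain there is a directed s-edge path from $v_i$ to $u_i$ (because $T_i$ has already progressed past $u_i$ and is now stuck at $v_i$; by 2PL, $v_i$ sits earlier in the code order than $u_i$... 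I would double-check this orientation and, if needed, traverse the chain in the opposite direction, using the fact that the \graph\ can be reoriented to accommodate the wait-for direction). Concatenating these pieces yields a closed walk that alternates w-edge, s-edge path, w-edge, s-edge path, \ldots, which is precisely the alternation pattern forbidding adjacent w-edges in the definition of an SLW-cycle. The figure \texttt{deadlock\_prune\_proof.pdf} referenced earlier suggests this is indeed the intended decomposition.

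The step I expect to be the main obstacle is reconciling the directedness of s-edges with the undirected w-edges while proving the closed walk is a genuine \emph{directed} cycle as required by the SLW-cycle definition. Specifically, I must show that the s-edge path within each chain can be oriented consistently with the global cycle direction; this likely requires either (i) a careful argument that in a 2PL execution the held-lock hop precedes the waiting hop in code order, so s-edges go from hold to wait, or (ii) a mild relaxation allowing s-edge paths to be traversed in either direction, matching the treatment of w-edges. A secondary subtlety is the read-write constraint from Section~\ref{sec:graph-defenition}: hops that would have been separate shared/exclusive lock acquisitions are collapsed into a single exclusive locking hop, so I must verify that collapsing hops cannot destroy an SLW-cycle that corresponds to a real deadlock (equivalently, that every real runtime wait-for edge still maps to a w-edge in the collapsed graph). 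Once these orientation and collapsing issues are handled, the contrapositive closes and the theorem follows.
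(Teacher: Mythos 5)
Your overall strategy is the same as the paper's: lift the runtime wait-for cycle into the \graph\ by mapping each wait-for edge to a w-edge between the two conflicting locking hops and stitching these together with s-edge paths inside each chain (proving the contrapositive rather than arguing by contradiction is an immaterial difference). The one place where your write-up goes wrong is the orientation of the intra-chain path, which you flag but leave unresolved, and your stated justification is backwards. The correct direction is the opposite of what you wrote in the construction: since $T_i$ is blocked at $v_i$, it has executed no hop at or after $v_i$, and since it already holds the lock acquired at $u_i$, the hop $u_i$ must lie strictly before $v_i$ in code order. Hence the directed s-edge path runs from the held-lock hop $u_i$ to the waiting hop $v_i$ --- your option (i) --- and the claim embedded in your construction that ``$v_i$ sits earlier in the code order than $u_i$'' is false. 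This is precisely the paper's step 2: every s-edge segment of the constructed cycle leads from the hop that acquired a lock to the hop that is waiting for one, which is what makes the closed walk a genuine \emph{directed} SLW-cycle. Your fallback option (ii), traversing s-edge paths in either direction, is not legitimate: SLW-cycles are defined over directed s-edges, so weakening that would prove a different statement than the theorem.

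Your secondary worry about the read-write constraint is a non-issue, but it is worth saying why: collapsing a shared/exclusive pair into a single exclusive locking hop only strengthens conflicts (an exclusive lock conflicts with everything the original pair conflicted with), so every runtime wait-for edge still induces a w-edge in the collapsed graph. Moreover, by eliminating lock upgrades the constraint rules out a transaction deadlocking with itself, which is what guarantees the wait-for cycle involves at least two transactions and that $u_i \neq v_i$ within each chain; this is the fact the paper uses (its remark that $n \geq 2$ and that each transaction locks a resource only once) to ensure the w-edges in the constructed cycle are separated by s-edges, as the SLW-cycle definition requires. With the orientation corrected and this non-adjacency point made explicit, your argument coincides with the paper's proof.
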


\begin{proof}
We will proceed by contradiction. Suppose that, despite the absence of an SLW-cycle, a deadlock occurs. Our goal is to show that this assumption leads to a contradiction, thus confirming that the transactions are deadlock-free if no SLW-cycle exists.

Let \( \{ T_1, T_2, \dots, T_n \} \) be the set of transactions involved in the deadlock. Under the 2PL protocol, a deadlock implies that these transactions form a cycle (a wait-for-cycle) in the wait-for graph, which can be described as:

\vspace{-1\baselineskip}
\[
\text{wait-for-cycle} = T_1 \rightarrow T_2 \rightarrow \dots \rightarrow T_n \rightarrow T_1,
\]

where each transaction \( T_i \) is waiting for a lock held by \( T_{i+1} \), with \( T_{n+1} = T_1 \).
Note that \( n \geq 2 \), as a single transaction cannot deadlock with itself---each transaction only locks a resource once (due to the read-write constraint).

In the \graph, each transaction \( T_i \) is represented as a sequence of hops connected by directed sibling edges (s-edges), reflecting the order of operations within the transaction. Let \( h_i^{s_1}, h_i^{s_2}, \dots, h_i^{s_n} \) be hops of \( T_i \), connected by s-edges:

\vspace{-1\baselineskip}
\[
h_i^{s_1} \xrightarrow[]{\text{s-edge}} h_i^{s_2} \xrightarrow[]{\text{s-edge}} \dots \xrightarrow[]{\text{s-edge}} h_i^{s_n}.
\]

We can construct an SLW-cycle using the transactions involved in the deadlock as follows:
\begin{enumerate}[leftmargin=*, topsep=0pt]
    \item Each wait-for edge in the wait-for-cycle corresponds to a w-edge in the \graph. In other words, if a wait-for edge exists between Transactions \( T_i \) and \( T_j \), then a \textbf{w-edge} exists between these transactions, connecting two locking hops (\( h_i^b \) and \( h_j^b \)) that conflict on the same resource, with \( h_i^b \) waiting for a lock held by \( h_j^b \).
   \item For each vertex \( T_i \) in the wait-for-cycle, there exist two distinct locking hops \( h_i^a \) and \( h_i^b \), connected by a directed \textbf{s-edge} (and possibly intermediate hops), where \( h_i^a \) occurs before \( h_i^b \). The hop \( h_i^a \) exists because a transaction is waiting for \( T_i \), and \( h_i^b \) exists as \( T_i \) is waiting for the next transaction in the wait-for-cycle. 
   Additionally, due to the existence of \( h_i^a \) and \( h_i^b \), w-edges are always separated by s-edges.
\end{enumerate}

Thus, an SLW-cycle is formed using this configuration, containing at least one \textbf{s-edge} (from step 2) and at least one \textbf{w-edge} (from step 1) connected by hops that are locking nodes ($V_{L}$). 
It is important to note that this cycle is directed, as each s-edge consistently leads from the hop that acquired the lock to the hop that is waiting for the lock. Additionally, in this cycle, no two w-edges are adjacent as they are always separated by s-edges (step 2).

This contradicts the initial assumption that no SLW-cycle exists in the \graph.
Therefore, the assumption must be false. Hence, if no SLW-cycle exists in the \graph, the set of transactions is deadlock-free under the 2PL protocol.
Figure~\ref{fig:deadlock_prune_proof} shows a breakdown of a wait-for-cycle leading to an SLW-cycle in the \graph.
\end{proof}

\subsection{Lock Manipulation Techniques}
\label{sec:lock-manipulation}
In 2PL, a lock must be acquired before its corresponding operation to ensure the serializability of transactions. However, there are no restrictions on the overall order of locks. \ourSystem\ utilizes this flexibility to eliminate deadlocks.
This section discusses the lock manipulation techniques that \ourSystem\ employs in its static analysis to eliminate potential deadlocks represented by an SLW-cycle.
This is achieved by repositioning certain locking nodes earlier in the transaction to create an ordered sequence of locks, merging locking nodes into atomic locking nodes, and annotating commutativity.


\cat{Moving Lock Nodes Earlier}
An effective deadlock elimination technique involves repositioning locking nodes earlier in the transaction sequence. Adjusting the timing of lock operations creates an ordered sequence for acquiring locks, transforming potential SLW-cycles into undirected cycles. Unlike SLW-cycles, undirected cycles do not lead to deadlocks.

Figure~\ref{fig:lock_manipulation}a demonstrates the removal of an SLW-cycle by moving a lock in \( T_2 \) earlier, thereby creating an ordered sequence of locks for resources \( A \) and \( B \). Even though there is a cycle in this \graph, it is not an SLW-cycle and will not lead to a deadlock.

Following the natural ordering of tables in the data model, particularly based on how foreign keys are defined in the relationships between tables, creates a consistent locking order for future transactions, ensuring that new transactions can also follow this ordering. 
However, there is no strict constraint on how locks should be ordered. As long as the SLW-cycle is eliminated, deadlocks will not occur during execution.
In our static analysis, \ourSystem\ explores various locking orders and selects the one that minimizes contention across transactions.



\cat{Combining Nodes}
One strategy involves combining multiple nodes in the \graph. This approach reduces the complexity of the graph by merging nodes, which leads to the removal of s-edges. By decreasing the number of s-edges, we eliminate the formation of SLW-cycles.
This method is particularly useful when lock acquisition of an operation is related to a previous operation, and it is not feasible to move any of the locks earlier. The atomic locking operation can be implemented using a secondary lock.
Figure~\ref{fig:lock_manipulation}b illustrates a possible combination of exclusive locks in transactions $T_1$ and $T_2$. To eliminate the sibling edge in the SLW-cycle, the locking of resource $B$ in $T_1$ and resource $A$ in $T_2$ are combined into an atomic locking node. This results in the removal of four sibling edges and resolves the cycle.

\five{
\ourSystem\ combines all locks on a table into a single locking node in the \graph\ for each transaction. Moving locks establishes a global deadlock-free locking order across tables, while atomic locks eliminate deadlocks within a single table.
}

\cat{Commutativity}
A w-edge between two resources can be removed if the resulting operations of those locks are known to be commutative. For instance, in many cases, two insert/delete operations are commutative, allowing the corresponding w-edge between these operations to be safely removed. Additionally, with application-level knowledge, certain operations can be explicitly annotated as commutative. This prevents the addition of a w-edge between their locking nodes, as they are non-conflicting.

\subsection{Partial Transaction Chopping}
\label{sec:partial-tx-chopping}
By releasing locks early, the level of parallelization increases. Knowing that transactions will not abort due to deadlocks enables other transactions to read resources written by uncommitted transactions, providing early write visibility. This parallelism is particularly beneficial for high-contention workloads where read-only transactions conflict with other transactions.

Releasing locks early for a resource divides a transaction into subtransactions. To maintain serializability, an SC-Graph must be created from these new subtransactions following lock release. This SC-Graph aligns with the transaction chopping concepts discussed in Section~\ref{sec:Background}.
\two{
However, traditional transaction chopping requires each sub-transaction to be fully \textbf{independent}---acquiring all locks at the start and releasing them upon completion---which often leads to SC-cycles and limits practicality.}
\two{
Our partial chopping relaxes this by allowing locks to be held across sub-transaction boundaries: a lock acquired early can be reused and released later in another sub-transaction. To enable this, we (1) redefine C-edges in the SC-Graph to connect conflicting \textit{locking nodes} (instead of operations), even if the operations occur across sub-transactions; (2) redefine sub-transactions by splitting at each unlock node.
%
%
%
This approach overcomes the inherent limitations of transaction chopping by providing a more flexible chopping mechanism without requiring alterations to the ordering of transaction operations.
}

In the following section, we discuss our approach for early lock release. 
We present our method for generating the SC-Graph, which enables conflict-serializable execution of subtransactions and differs from conventional SC-Graph generation in Transaction Chopping. We then prove the correctness of our method in ensuring serializability. For clarity, we retain the terminology of transaction chopping with some modifications in the SC-Graph generation.


\cat{SC-Graph Generation}
The SC-Graph is generated by chopping the transaction at each release of a lock, splitting the transaction into smaller sub-transactions based on unlock nodes. 
Subtransactions are connected using sibling edges (S-edge), and a subtransaction is connected to another with a conflict edge (C-edge) if they both have a conflicting locking node. As demonstrated in Section~\ref{sc-correctness}, conflicts between locking nodes are sufficient to ensure correctness, even if the operations occur in separate subtransactions.


\four{
To model potential self-conflicts, the SC-Graph includes two instances of each transaction type. This is necessary because, during execution, concurrent instances of the same transaction chain may access conflicting resources. Including two instances ensures that the SC-Graph captures these self-conflicts.
}

\begin{figure}
    \centering
    \includegraphics[width=\linewidth]{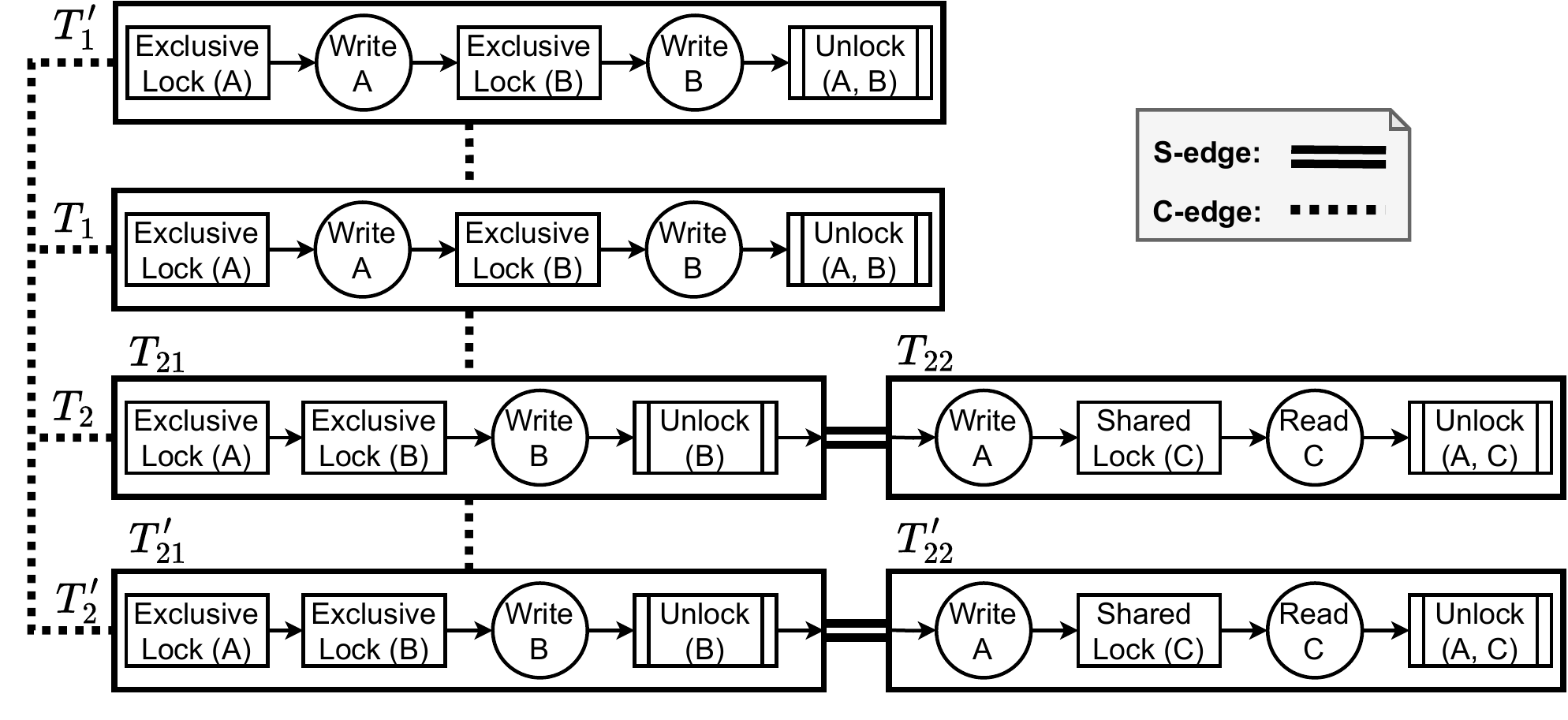}
    \vspace{-2em}
    \caption{The resulting SC-graph after chopping $T_2$ into two sub-transactions: $T_{21}=W(B)$ and $T_{22}=W(A)R(C)$}

    \label{fig:slw-sc-graph}
\end{figure}

\cat{Example}
\sloppy Assume the system has the two transactions $T_1= W(A) W(B)$ and $T_2= W(B) W(A) R(C)$, and we have already analyzed the transactions using \graph. Following the analysis, we decided to move the lock operation for resource $A$ earlier, before all locking nodes. By moving the lock earlier, we guarantee that no deadlock will occur during execution.

In the next step, to improve concurrency, we decide to release the lock on resource $B$ immediately after the write operation is completed, allowing other transactions to read its value. Figure~\ref{fig:slw-sc-graph} illustrates the resulting SC-graph after chopping $T_2$ into two sub-transactions: $T_{21}$ and $T_{22}$. 
Note that in this graph, two instances of each sub-transaction are included, as instances of the same chain might conflict. In this SC-graph, no SC-cycle exists, ensuring that the corresponding chopping remains serializable.


\cat{Correctness}
\label{sc-correctness}
\ourSystem\ enforces a specific ordering of operations through locks, meaning that subtransactions consistently follow the same order for conflicting locking nodes rather than the order of operations. This ensures that, in the Serialization Graph (SG), transactions (subtransactions) respect the direction of conflicts between locks rather than the sequence of operations. As a result, the SG between subtransactions is constructed based on resource locking rather than on individual operations.
Using this definition of SG, we will prove the following by contradiction. 

\begin{theorem}
Any execution of chopping where the SC-graph contains no SC-cycles yields an acyclic serialization graph on the given transaction instances $T_1, T_2, \dots, T_n$ and, therefore, is equivalent to a serial execution of committed transaction instances.
\end{theorem}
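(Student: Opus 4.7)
The plan is to reduce the theorem to the classical fact that an acyclic serialization graph (SG) implies conflict-serializability, so the real work is establishing SG acyclicity. I would proceed by contradiction, following the general structure of the Shasha et al.\ chopping correctness argument, but adapted to our \emph{partial} chopping, where C-edges in the SC-graph are drawn between conflicting \textit{locking} nodes rather than between conflicting operations. The key observation that makes the adaptation work is that \ourSystem\ already defines SG edges over locking-node conflicts, so the mapping from an SG edge to an SC-graph C-edge is essentially a definition-chase, once one keeps explicit track of which subtransaction owns which locking node.

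Concretely, suppose for contradiction that the SC-graph contains no SC-cycle but some execution produces an SG cycle $T_{i_1} \to T_{i_2} \to \cdots \to T_{i_k} \to T_{i_1}$. Each edge $T_{i_j} \to T_{i_{j+1}}$ arises from a pair of conflicting locking nodes in which $T_{i_j}$ acquired its lock first under 2PL. For each such edge, I identify the subtransaction $s_j^{\mathrm{out}}$ of $T_{i_j}$ that contains the earlier locking node and the subtransaction $s_{j+1}^{\mathrm{in}}$ of $T_{i_{j+1}}$ that contains the later locking node; by the SC-graph construction these two subtransactions are joined by a C-edge. Within each transaction instance, the incoming subtransaction $s_j^{\mathrm{in}}$ (from the previous SG edge) and the outgoing subtransaction $s_j^{\mathrm{out}}$ (for the next SG edge) lie on the same chain and are therefore linked by an S-path. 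Concatenating the alternating C-edges with these S-paths yields a closed walk in the SC-graph that contains at least one C-edge and at least one S-edge, namely an SC-cycle, contradicting the hypothesis. The two-instance construction handles the case where a single transaction type appears more than once in the SG cycle, since its two copies supply the S-edges that would otherwise collapse to nothing.

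The main obstacle, and the part genuinely specific to partial chopping, is justifying the lifting step when locks span subtransaction boundaries. Under partial chopping, a lock acquired inside $s_j^{\mathrm{out}}$ may only be released in a strictly later subtransaction on the same chain, so I need to argue that this long-lived lock still forces the conflicting transaction to wait at its own locking node, and hence that the SG edge is faithfully represented by a C-edge between the subtransactions that \emph{contain the locking nodes} rather than the ones that contain the corresponding operations. A related subtlety is ruling out degenerate cases where $s_j^{\mathrm{in}} = s_j^{\mathrm{out}}$ for every $j$, since then no S-edges would appear in the constructed walk; this is where splitting subtransactions exactly at unlock nodes, together with the two-instance construction, becomes essential, guaranteeing that at least one S-edge is forced into the walk whenever the SG cycle is non-trivial.
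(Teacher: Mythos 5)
Your overall strategy is the same as the paper's: argue by contradiction, lift a hypothetical serialization-graph (SG) cycle into the SC-graph by mapping each SG edge to a C-edge between the subtransactions that contain the conflicting \emph{locking} nodes, join consecutive subtransactions of the same instance along the chain, and conclude that an SC-cycle exists. Your remark that the lifting step is essentially a definition-chase because \ourSystem\ already defines the SG over locking-node conflicts is exactly how the paper handles what you call the ``main obstacle'': the SG between subtransactions is declared to be constructed on resource locking rather than on individual operations, so the conflict is attributed to the piece owning the locking node even when the corresponding operation or unlock lies in a later piece.

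The genuine gap is your treatment of the degenerate case in which $s_j^{\mathrm{in}} = s_j^{\mathrm{out}}$ for every $j$, so the lifted walk consists of C-edges only and is not an SC-cycle. You attribute its exclusion to ``splitting subtransactions exactly at unlock nodes, together with the two-instance construction,'' but these are static properties of how the SC-graph is drawn; they say nothing about which runtime executions are possible, and an all-C-edge cycle among pieces is perfectly consistent with both. The paper's step~1 supplies the missing idea: each piece executes under two-phase locking, and the serialization graph of committed 2PL transactions is acyclic, so an SG cycle cannot be confined to a single piece per instance; hence some instance $T_l$ must contribute two distinct pieces ($ST_1 \neq ST_n$), which are connected by an S-edge, and only then does the walk become an SC-cycle contradicting SC-acyclicity. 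Without invoking this 2PL-per-piece acyclicity argument (which itself merits a sentence under partial chopping, since a piece may hold locks acquired in earlier pieces), your construction can terminate in a C-only cycle and no contradiction follows.
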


\begin{proof}
We proceed by contradiction. Consider an execution of chopping where the SC-graph contains no SC-cycle of $T_1, T_2, \dots, T_n$. Suppose that there exists a cycle in the serialization graph of $T_1, T_2, \dots, T_n$ resulting from this execution. That is, a cycle such as $T_l \rightarrow T_j \rightarrow \dots \rightarrow T_l$. Identify the subtransaction of the chopping associated with each transaction instance involved in this cycle: $ST_1 \rightarrow ST_2, \dots \rightarrow ST_n$, where both $ST_1$ and $ST_n$ belong to the same transaction instance $T_l$.

\begin{enumerate}[leftmargin=*,topsep=0pt]
    \item Sub transactions $ST_1$ and $ST_n$ cannot be identical since each piece adheres to two-phase locking by the execution rules, and it is a known fact that the serialization graph of a set of committed two-phase locked transaction instances is acyclic. Since $ST_1$ and $ST_n$ are distinct pieces within the same transaction instance $T_l$, there is an S-edge between them in the chopping graph.
    \item Every directed edge in the serialization graph cycle corresponds to a C-edge in the SC-graph, reflecting a conflict. Therefore, the cycle in the serialization graph implies the existence of an SC-cycle in the chopping graph, which contradicts our assumption of SC-acyclic execution.
\end{enumerate}
Thus, the execution of chopping with no SC-cycle in their SC-Graph guarantees conflict-serializability, making it equivalent to a serial execution of the transaction instances.

\end{proof}

\vspace{-12pt}
\cat{Self-Conflicts}
Self-conflicts may arise among subtransactions, leading to the creation of an SC-cycle between a transaction and a second instance of itself. 
An approach to eliminate self-conflict edges in later subtransactions is to move the conflicting locks to an earlier point, specifically to the first subtransaction. This adjustment removes the SC-cycle by preventing conflicts in later stages.
In \ourSystem's static analysis, we consistently move conflicting locks earlier to eliminate self-conflicting SC-cycles.




\subsection{Static Analysis}
\label{sec:static-analysis}

To statically analyze transactions, we propose an algorithm that takes a set of input transactions and produces a new set of transactions with modified lock ordering, while preserving the original operation ordering. 
The transformed transactions adhere to an SLW-cycle-free \graph\ and are further optimized using partial transaction chopping to release locks as early as possible.
The algorithm systematically explores all feasible lock acquisition strategies and selects the best one based on a contention ranking. The goal is to minimize contention among transactions, which requires a quantitative metric to evaluate and compare different strategies.

\cat{Contention Score}
To compare different \graphs\ of transactions, we define a contention score that measures the locking contention caused by each transaction.
The contention score is computed based on how long (how many operations) locks are held within a transaction, reflecting the potential blocking time for other transactions.
To account for the importance of each lock, we assign higher weights to locks on smaller tables, as they are considered more contentious.
We define the contention score as shown in Equation~\eqref{eq:contentionScore}, where higher contention scores indicate higher levels of contention within a given set of transactions.

\begin{equation}
\label{eq:contentionScore}
\text{Contention Score} = \footnotesize\sum_{\text{chain} \in \text{\graph}} \Biggl[
\quad \sum_{v \in V_{L}(\text{chain})}
\frac{\text{contention}(v)}{|\text{table}(v)|}
\Biggr]
\end{equation}

Equation~\eqref{eq:contentionScore} computes a contention score for the entire graph by summing, for each chain (representing a transaction) in the \graph, the contention levels of all locking nodes.
The inner summation iterates over all locking nodes \( v \in V_{L}(\text{chain}) \), where \( V_{L}(\text{chain}) \) denotes the set of locking nodes in the given chain. 
The function \( \text{contention}(v) \) calculates the number of operations between the locking node \( v \) and its corresponding unlocking node within the same transaction, implying that maintaining locks for longer periods increases the contention score. This value is then divided by \( |\text{table}(v)| \), the population (i.e., number of records) of the table associated with \( v \), accounting for the relative importance of each locking node. Tables with smaller populations create higher contention, thus increasing the overall contention score.

\cat{Static Analysis Algorithm} Algorithm~\ref{alg:static-analysis} provides an overview of \ourSystem\ static analysis of transactions.

\begin{algorithm}
\small
\caption{Brook2PL Static Analysis}
\label{alg:static-analysis}
\begin{algorithmic}[1]
\State \textbf{Input:} Set of transactions $\mathcal{T} = \{T_1, T_2, \dots, T_n\}$
\State \textbf{Output:} Deadlock-free transaction set $\mathcal{T}' = \{T'_1, T'_2, \dots, T'_n\}$
\Procedure{Brook2PL-Analyze}{$\mathcal{T}$}

    \State $G_{\text{init}} \gets \text{BuildInitialSLWGraph}(\mathcal{T})$
    \Comment{\footnotesize Build initial \graph\ including read-write constraints and commutativity---may contain SLW-cycles} \label{alg:static-analysis-init} \small

    \State $\mathcal{G}_{\text{deadlock-free}} \gets \text{GenerateCycleFree}(G_{\text{init}})$
    \Comment{\footnotesize Enumerate all SLW-acyclic graphs by reordering conflicting locking nodes using backtracking} \label{alg:static-analysis-backtrack} \small

    \State $\mathcal{G}_{\text{partially-chopped}} \gets \text{GreedyEarlyLockRelease}(\mathcal{G}_{\text{deadlock-free}})$
    \Comment{\footnotesize Greedily release locks earlier in \graph} \label{alg:static-analysis-partial-chopping} \small

    \State $G_{\text{Brook2PL}} \gets \text{SelectBestGraph}(\mathcal{G}_{\text{partially-chopped}})$
    \Comment{\footnotesize Select the best \graph\ based on the contention score} \label{alg:static-analysis-select} \small

    \State $\mathcal{T}' \gets G_{\text{Brook2PL}}.\text{ExtractTransactions}()$
    \label{alg:static-analysis-extract-transactions}

    \State \textbf{return} $\mathcal{T}'$

\EndProcedure
\end{algorithmic}
\end{algorithm}

First, Algorithm~\ref{alg:static-analysis} constructs an initial \graph\ of the input set of transactions in Line~\ref{alg:static-analysis-init} ($G_{\text{init}}$). This \graph\ follows the representation discussed in Section~\ref{sec:graph-defenition}, representing locking and operations as nodes and relations as edges. It also incorporates the read-write constraint of \graph\ and annotated commutativity to remove unnecessary w-edges. 
Note that the initial \graph\ of transactions might not be cycle free, thus leading to a potential deadlock during the execution.

Subsequently, \ourSystem\ employs a backtracking procedure in Line~\ref{alg:static-analysis-backtrack} to generate all feasible SLW-acyclic graphs by reordering conflicting locks earlier ($\mathcal{G}_{\text{deadlock-free}}$). 
This procedure works as follows: for each chain in \graph\ representing a transaction, the locking nodes are moved earlier in every possible way, generating all possible orderings of locking nodes within the chain. These combinations from all chains are then merged using the Cartesian product to generate every possible \graph\ that includes all feasible lock combinations. Note that only locks that conflict between chains (which might create a deadlock) are moved. The state space is also pruned by leveraging dependencies between operations: some operations (and their associated locks) depend on earlier operations in the chain due to transaction logic, and locking nodes cannot be moved earlier than their corresponding operations. Finally, only SLW-acyclic \graphs\ are returned as the set $\mathcal{G}_{\text{deadlock-free}}$.

After exploring all possible \graphs, Line~\ref{alg:static-analysis-partial-chopping} applies a greedy algorithm to release locks earlier and partially chop transactions, resulting in the set $\mathcal{G}_{\text{partially-chopped}}$. In each iteration, the algorithm moves an unlock node one position earlier in the transaction chain. 
During this process, we verify conditions for correctly partially chopping a transaction, as outlined in Section~\ref{sec:partial-tx-chopping}. 
The greedy algorithm also attempts to optimize the graph based on the contention score in Equation~\ref{eq:contentionScore}. If moving the unlock node one position earlier does not reduce this score, we stop moving the unlock node.

Finally, \ourSystem\ selects the best \graph\ with the lowest contention score (Line~\ref{alg:static-analysis-select}) and reconstructs the new deadlock-free transactions with modified lock ordering (Line~\ref{alg:static-analysis-extract-transactions}) as the output.
\five{
\ourSystem\ handles transactions with conditional logic by building a ''supergraph'' containing all possible execution paths (chains) of a transaction. The lock reordering algorithm then searches for all possible lock acquisition orders that are deadlock-free across all paths.
If a deadlock-free ordering for a transaction cannot be found, that transaction type is designated as ''dynamic'' and handled according to Section~\ref{sec:DynamicTransactions}.
}

\cat{Time Complexity}
The overall time complexity of Algorithm~\ref{alg:static-analysis} is dominated by the backtracking step (Line~\ref{alg:static-analysis-backtrack}), which explores all valid reorderings of conflicting locking nodes across transactions. For each transaction \(T_i\) with \(k_i\) movable locks, there are up to \(k_i!\) valid reorderings, leading to a worst-case complexity of 
\(
\mathcal{O}\left(\prod_{i=1}^{n} k_i!\right)
\)
for generating and validating all SLW-acyclic \graphs\ (where \(n\) is the total number of all transactions). The subsequent greedy optimization (Line~\ref{alg:static-analysis-partial-chopping}) attempts to move each unlock node earlier, taking up to \(\mathcal{O}(g \cdot u)\) time, where \(g\) is the number of generated acyclic \graphs\ and \(u\) is the number of unlocks per \graph. Finally, the \graph\ selection contributes \(\mathcal{O}(g \cdot \log g)\) time as we sort \graphs\ based on contention score. In practice, the search space is significantly reduced through pruning based on dependency constraints, making the approach tractable for typical workloads; moreover, the algorithm is highly parallelizable.

\vspace{-5pt}
\cat{Example}
\five{
We provide a concrete example of how transactions are converted into \graph\ and how static analysis, reorder, and combine locks to provide a deadlock-free version in Section~\ref{sec:online-game-store}.
}

\vspace{-15pt}
\subsection{Dynamic Transactions}
\label{sec:DynamicTransactions}

To handle dynamic transactions, \ourSystem\ categorizes transactions into two sets: the \(Static\) set, which includes contentious transactions that have been statically analyzed and are guaranteed never to abort, and the \(Dynamic\) set, which includes dynamic transactions that may abort in case of a conflict.
Our static analysis guarantees that transactions within the \( Static \) set cannot cause deadlocks among themselves. However, conflicting transactions involving the \( Dynamic \) set---either internally within the \( Dynamic \) set or between the \( Dynamic \) and \( Static \) sets---may still result in deadlocks.

In \ourSystem, we enable the optional execution of dynamic transactions by permitting aborts exclusively for those in the \( Dynamic \) set.
Our design assumes contentious transactions will be statically analyzed and included in the \( Static \) set. Consequently, introducing aborts only for dynamic transactions does not significantly increase abort rates, thereby avoiding degradation of system performance.

If dynamic transactions are enabled, we apply the following deadlock prevention protocol. When a transaction \( T_i \) requests a lock currently held by transaction \( T_j \); depending on the sets to which \( T_i \) and \( T_j \) belong, we follow one of these rules:

\begin{itemize}[leftmargin=*, topsep=0pt]
    \item \textbf{Both \( T_i \in Static \) and  \( T_j \in Static \):}
    \begin{itemize}[topsep=0pt]
        \item 
        No deadlock prevention is needed since this scenario does not lead to deadlocks.
    \end{itemize}

    \item \textbf{Both \( T_i \in Dynamic \) and \( T_j \in Dynamic \):}
    \begin{itemize}[topsep=0pt]
        \item 
        If \( T_i \) is \two{older} than \( T_j \), abort \( T_j \); otherwise, \( T_i \) waits.
    \end{itemize}

    \item \textbf{\( T_i \in Dynamic \) and \( T_j \in Static \):}
    \begin{itemize}[topsep=0pt]
        \item If \( T_i \) is \two{older} than \( T_j \), abort \( T_i \); otherwise, \( T_i \) waits.
    \end{itemize}

    \item \textbf{\( T_i \in Static \) and \( T_j \in Dynamic \):}
    \begin{itemize}[topsep=0pt]
        \item If \( T_i \) is \two{older} than \( T_j \), abort \( T_j \); otherwise, \( T_i \) waits.
    \end{itemize}
\end{itemize}

This protocol ensures a consistent timestamp-based ordering among all transactions. 
Note that \ourSystem\ falls back to Wound-Wait 2PL when only dynamic transactions are used. 
However, by statically analyzing transactions, we assign them higher priority over dynamic transactions in case of a conflict.

\five{Additionally, \ourSystem\ maintains correctness in case of uncommitted data reads when dynamic transactions are enabled. This is because static transactions are guaranteed not to have concurrency-control-related aborts. If a dynamic transaction reads uncommitted data from a static transaction and subsequently aborts, it will simply roll back its own changes with no effect on the static transaction, preserving correctness.}

\subsection{Limitations}
\ourSystem\ imposes certain limitations on the system, which we discuss in this section.

\cat{Preprocessing}
\five{
Similar to other preprocessing methods, our approach introduces some limitations due to the requirement of a preprocessing phase for transaction analysis.}
However, static analysis is performed offline as an optimization and is only necessary when new contentious transaction types are introduced to the system. It is not required during execution. \ourSystem's dynamic transaction handling approach is designed to support all workloads without the need for continuous static analysis.
\five{
\ourSystem\ prevents deadlocks in statically analyzed transactions. Deadlocks in dynamic transactions are unpredictable, but \ourSystem\ handles them to ensure the system remains correct. 
}

\five{
%
\ourSystem\ targets workloads that require prior knowledge of static transaction types. This aligns with real-world systems where many transactions are implemented as stored procedures or can be converted into static transactions with minimal changes~\cite{nguyen2025staticTransactions}.
}

\cat{User and Programmatic Aborts}
The early release of locks restricts the ability to perform aborts after locks are released in a static transaction. Once a lock on a resource is released, the resource may be read by other concurrent transactions, making it impossible to abort the previous operation without affecting serializability.
In \ourSystem, user-initiated aborts are only permitted before any locks are released in the static transaction. If a user wishes to abort a transaction after release of locks, they must provide a compensating transaction.
\four{For workloads with frequent user-initiated or data-dependent aborts, \ourSystem\ (and any protocol with early write visibility) faces limitations.}

However, if user aborts are necessary, besides disabling the early release of locks, a similar approach to previous works~\cite{guo2021bamboo} can be implemented by tracking dirty reads and performing cascading aborts in cases of inconsistency, \five{to benefit from early release of locks}.
\four{For workloads with a low rate of user aborts, this method will not be inefficient or impose a significant overhead on \ourSystem}.
\five{In all cases, \ourSystem\ guarantees deadlock-freedom.}
We omit the details of this method, as it is beyond the scope of this paper.

\begin{figure*}
    \centering
    \includegraphics[width=\linewidth]{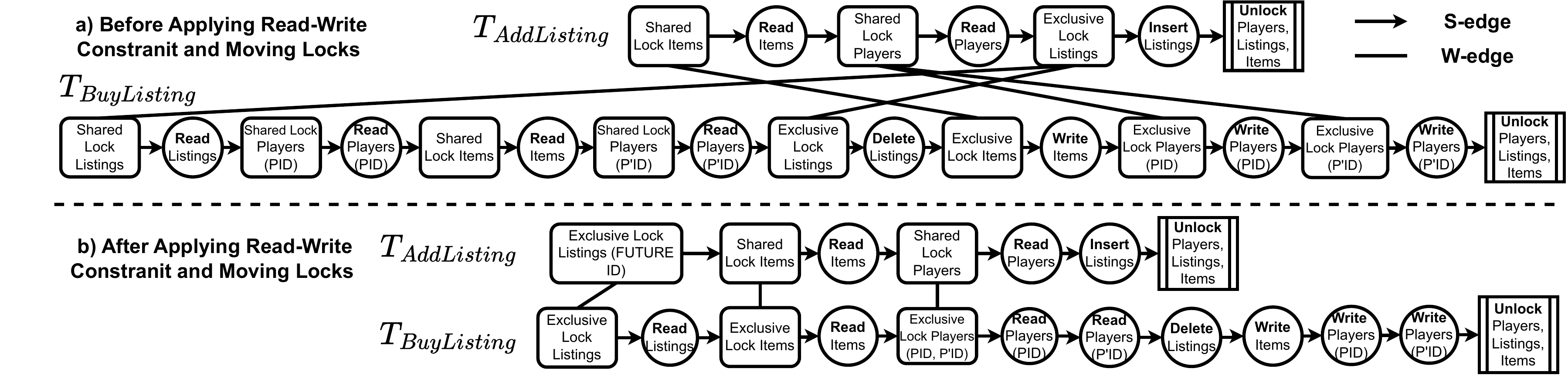}
    
    \caption{\graph\ representation of the store workload: (a) \graph\ without applying the read-write constraint or moving locks earlier; (b) \graph\ after applying the read-write constraint and moving locks earlier to ensure an SLW-cycle-free \graph. This execution will be utilized in \ourSystem.
}
    \label{fig:storeshop_slw}
\end{figure*}
\begin{figure}
    \centering
    \includegraphics[width=0.9\linewidth]{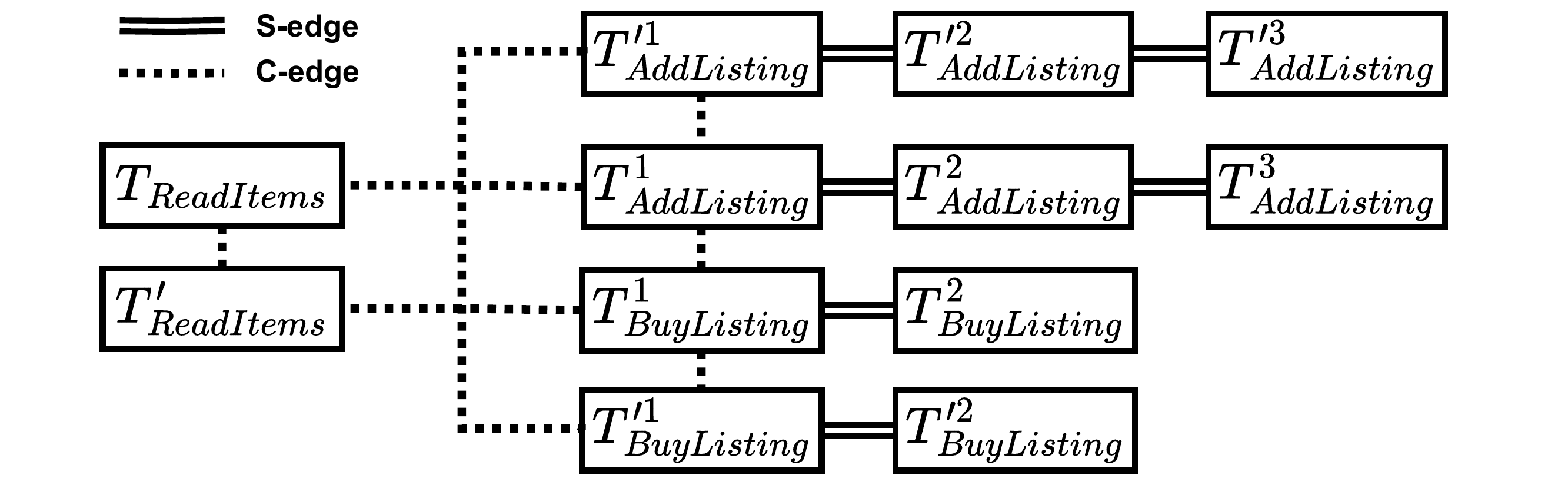}
    \vspace{-10pt}
    \caption{SC-Graph representation of the store workload after the early release of locks}
    \vspace{-5pt}
    \label{fig:storeshop_sc}
\end{figure}

\section{Evaluation}
\label{sec:Evaluation}
To evaluate \ourSystem, we perform several experiments comparing its performance against the state-of-the-art concurrency control protocol Bamboo~\cite{guo2021bamboo} as well as a basic 2PL protocol. The transactions used for evaluation fall into two categories. 
First, we use a synthetic transactional workload that simulates an online store for buying and selling game items.  This workload is designed to highlight various aspects of \ourSystem\ and to walk through the process of statically analyzing transactions.
Second, we evaluate \ourSystem\ using the TPC-C~\cite{tpc-c} benchmark under high-contention conditions, following the setup of prior works~\cite{ren2016deadlockfree, guo2021bamboo, prasaad2020handling}.

In the rest of this section, we first discuss the two workloads and their transactions used in our experiments and how \ourSystem\ statically analyzes them. Next, we describe the implementation of the transaction layer and our experimental setup. Finally, we present and analyze the results of various experiments focusing on transaction performance and tail latency.

\subsection{Online Game Store Workload}
\label{sec:online-game-store}
The store workload represents an online store for game items. The design of the data tables in this online game store includes three tables: \textbf{\textit{Players}}, \textbf{\textit{Items}}, and \textbf{\textit{Listings}}. The \textbf{Players} table contains information about individual players, including \texttt{PId} (a primary key), \texttt{name} (the player’s name), and \texttt{cash} (the player’s available balance). The \textbf{Items} table represents game items, with each record containing \texttt{IId} (a primary key), \texttt{name} (the item name), and \texttt{owner} (a foreign key referencing \texttt{PId} in the Players table to denote ownership). Finally, the \textbf{Listings} table tracks items available for sale, including \texttt{LId} (a primary key), \texttt{item} (a foreign key referencing \texttt{IId} from the Items table), and \texttt{price} (the listed sale price).


\begin{lstlisting}[float, caption={Add Listing Transaction}, floatplacement=t, label={alg:add_listing}]
Transaction AddListing(PId, IId, price): 
item = Read Item(IId) (*@\label{alg:add_listing_itemcheck}@*) // Check item exists and verify its owner
player = Read Player(PId) (*@\label{alg:add_listing_playercheck}@*) // Check player exists
Insert Listing(IId, price) (*@\label{alg:add_listing_insertListing}@*)
\end{lstlisting}

\begin{lstlisting}[float, caption={Buy Listing Transaction}, floatplacement=t, label={alg:buy_listing}]
Transaction BuyListing(PId, LId): 
listing = Read(Listings, LId) (*@\label{alg:buy_listing_listingCheck}@*) // Check listing exists
buyer = Read(Players, PId) // Check player & funds
item = Read(Items, listing.item) // Retrieve Item ID from Listing and verify item existence
owner = Read(Players, item.owner) (*@\label{alg:buy_listing_checkSeller}@*) // Check seller exists
Delete(Listings, LId) (*@\label{alg:buy_listing_deleteListing}@*) // Remove the listing after checks
Write(Items, listing.item) (*@\label{alg:buy_listing_transferOwner}@*) // Transfer ownership
Write(Players, PId) // Deduct buyer
Write(Players, item.owner) (*@\label{alg:buy_listing_creditMoney}@*) // Credit seller
\end{lstlisting}

\begin{lstlisting}[caption={Read Items Transaction}, label={alg:read_items}, float, floatplacement=t]
Transaction ReadItems(items):
for IId in items:
    Read Item(IId)
\end{lstlisting}
\label{sec:store-transactions}
\cat{Transactions}
Three transactions are implemented in this online game store. The first transaction, represented in Transaction~\ref{alg:add_listing}, shows how a player lists an item for sale with a specific price. In this transaction, after verifying the ownership of the item (line~\ref{alg:add_listing_itemcheck}) and ensuring the existence of the player (line~\ref{alg:add_listing_playercheck}), a new listing is inserted with the desired price.

The second transaction involves buying an already listed item, as shown in Transaction~\ref{alg:buy_listing}. 
In this transaction, we first perform initial checks to verify the existence of the listing and the player and ensure that the buyer has sufficient funds for the purchase (Lines~\ref{alg:buy_listing_listingCheck}–\ref{alg:buy_listing_checkSeller}). 
Next, the listing is deleted (Line~\ref{alg:buy_listing_deleteListing}), and the money is transferred from the buyer to the seller (Lines~\ref{alg:buy_listing_transferOwner}–\ref{alg:buy_listing_creditMoney}).


The third transaction, shown in Transaction~\ref{alg:read_items}, is a read-only transaction that retrieves a set of requested items. This transaction represents the potential read-only transactions that might be executed in a highly contentious system. Read-only transactions are significant because reads dominate real-world workloads~\cite{lu2020performanceOfReadOnly, bronson2013tao}.


\cat{Static Analysis}
We implement the static analysis algorithm shown in Algorithm~\ref{alg:static-analysis} to parse transactions and produce deadlock-free transactions. Figure~\ref{fig:storeshop_slw} shows the \graph\ representation for transactions in the store workload. 
In Figure~\ref{fig:storeshop_slw}a, the \graph\ is shown prior to applying the read-write constraint or moving any locks earlier—in other words, before \ourSystem\ static analysis.

Figure~\ref{fig:storeshop_slw}b shows the \graph\ after applying the read-write constraint and performing a lock movement. The read-write constraint reduces the locks used for Listings, Players, and Items into exclusive locks. Subsequently, by moving the Listings lock in the \textit{Add Listing} transaction to the beginning of the transaction, the SLW-cycle is eliminated, ensuring the deadlock-free execution of these transactions.
It is important to note that the \textit{Read Items} transaction is not shown in this figure, as the inclusion of a read-only transaction chain does not create any SLW-cycle in the \graph.

\cat{Partial Transaction Chopping}
Our static analysis algorithm improves concurrency by releasing the lock on \textit{Items} immediately after the final write operation in the \textit{Buy Listing} transaction. 
Additionally, it releases the locks on \textit{Items} and \textit{Players}, respectively, after the lock and read operations on \textit{Players} in the \textit{Add Listing} transaction.
The corresponding SC-Graph for this partial transaction chopping is shown in Figure~\ref{fig:storeshop_sc}. In this chopping, the \textit{Buy Listing} transaction is divided into two subtransactions, and the \textit{Add Listing} transaction into three. This SC-Graph does not contain any SC-cycles, ensuring the serializability of the final execution.

\subsection{TPC-C Workload}
The TPC-C benchmark~\cite{tpc-c} is widely used to evaluate concurrency control protocols under high-contention workloads. It simulates a wholesale supplier managing orders and payments across multiple warehouses in a dynamic and complex database environment.

Following previous works~\cite{ren2016deadlockfree, guo2021bamboo, prasaad2020handling}, we restrict our evaluation to TPC-C's \textit{New Order} and \textit{Payment} transactions, as they constitute the majority of the benchmark (approximately \(45\%\) and \(43\%\), respectively). Additionally, these transactions are short update transactions that place greater stress on concurrency control.
Therefore, in our evaluations, we conducted experiments with a workload comprising \(50\%\) \textit{New Order} transactions and \(50\%\) \textit{Payment} transactions (see Appendix A for a breakdown of their static analysis).
In accordance with the benchmark specifications, \(1\%\) of these transactions were aborted to simulate user aborts. In this benchmark, warehouses serve as hotspots for the workload, and contention increases as the number of warehouses decreases.

\vspace{-4pt}
\subsection{Implementation and Setup}
\vspace{-3pt}
\cat{System Implementation}
\ourSystem\ static analysis (Algorithm~\ref{alg:static-analysis}) is implemented in Java. This implementation parses the stored-procedure transactions---using the transaction representation language used in Section~\ref{sec:store-transactions}---and produces deadlock-free transactions as a preprocessing step. 
In our experiments, transactions, including their locks and operations, are written and executed as one-shot stored procedures, a common practice to evaluate concurrency control under high contention~\cite{ren2016deadlockfree, prasaad2020handling, guo2021bamboo, bailis2014coordination}. However, this does not preclude \ourSystem\ from handling interactive transactions, as dynamic transactions are supported.

Our 2PL implementation in Java uses a hash table as the lock table to store Lock objects. We also implement an in-memory database with Write-Ahead Logging (WAL), using a hash table to store the data. To enhance scalability, we use Java's ConcurrentHashMap~\cite{oracle_concurrenthashmap_java11}, which employs per-bucket locking that is highly optimized for concurrent execution. Additionally, our implementation acquires fine-grained (row-level) logical locks on records. Consequently, latch contention only occurs when multiple threads simultaneously acquire or release the logical locks of the same record.
Our implementation follows a basic 2PL that does not require conflict handling for static transactions, as they are designed to be deadlock-free. However, if dynamic transactions are enabled, conflicts are managed as described in Section~\ref{sec:DynamicTransactions}.

\vspace{-3pt}
\cat{Comparison Baseline}
We implement wound-wait 2PL~\cite{bernstein1987concurrency}, Bamboo~\cite{guo2021bamboo}, \re{optimistic concurrency control (OCC)~\cite{kung1981occ}, IC3~\cite{wang2016ics3}, and a basic lock sorting approach} as comparison baselines.



\begin{itemize}[leftmargin=*, topsep=0pt]
    \item \textbf{Wound-Wait:} The wound-wait variant of the 2PL protocol is a deadlock prevention mechanism. This mechanism ensures deadlock prevention while prioritizing older transactions.
    \item \textbf{Bamboo:} Bamboo is a state-of-the-art 2PL protocol that allows transactions to read dirty data during the execution phase (thereby violating 2PL) while still ensuring serializability. This protocol, which is a variant of wound-wait 2PL, permits a transaction to release its lock on a record after its last update, enabling other transactions to access the data. To enforce serializability, Bamboo tracks the dependencies of dirty reads and cascadingly aborts transactions when these dependencies are violated. 
    \two{Bamboo includes a latch-reduction optimization unrelated to concurrency control. Our implementation omits this to ensure a fair comparison focused solely on concurrency control logic.}
    \item \re{\textbf{Sorted Locks:} This 2PL baseline assumes known read/write sets (unlike \ourSystem) and acquires locks in sorted order at transaction start, ensuring deadlock-free execution.}
    \item \re{\textbf{OCC:} Optimistic concurrency control allows transactions to execute without locking resources, assuming conflicts are rare, and checks for conflicts before committing.}
    \item \re{\textbf{IC3:} IC3 is a state-of-the-art transaction chopping method that statically analyzes transactions and executes sub-transactions using OCC. It provides an approach to execute deadlock-prone SC-Cycles in SC-Graphs. We statically analyze our workloads using this method (see Appendix B). For a fair comparison, we apply our definition of conflicting operations in the static analysis: two operations conflict if they access the same row of a table, regardless of the column.} 
\end{itemize}


\vspace{-3pt}
\cat{Experimental Setup}
The experiments are conducted on \textit{cascadelake} node of Chameleon Cloud~\cite{keahey2020chameleon}. This machine has 2 \textit{Intel Xeon Gold 6242 CPUs} at \textit{2.8 GHz} with \textit{96} virtual threads and is also equipped with 187 GB RAM and 10 Gigabit Ethernet. 

The database size for our store workload is initialized with $500,000$ players in \textit{Players} table, each owning $5$ items, resulting in an \textit{Items} table containing $2,500,000$ rows. Additionally, the initial database includes $100,000$ pre-existing listings in the \textit{Listings} table.
%
We vary the database size in the TPC-C benchmark by increasing the number of warehouses.

To execute transactions, they are dispatched from a benchmark layer using \(64\) threads. Each benchmark runs for \(60\) seconds, excluding the warm-up period. In the event of a transaction abort, the transaction is immediately retried with its initial Timestamp ID.

\vspace{-2pt}
\cat{Controlling Contention}
To control the level of contention in our experiments, we use two metrics: 

\begin{itemize}[leftmargin=*, topsep=1pt]
    \item \textit{Number of Hot records (\( H \))} specifies the number of records in the hotspot. The hotspot records are selected from the tables frequently accessed by all transactions. For the store workload, \( H \) represents the number of Items and their associated Players chosen as hotspots. In the TPC-C workload, \( H \) warehouses are selected as hotspot records.
    
    \item \textit{Probability of hot record selection (\( P_{HotRecord} \))} defines the likelihood of selecting a hotspot record for the executing transaction. For instance, if the probability of hot record selection is \( 10\% \), there is a \( 10\% \) chance that the transaction's inputs will be chosen from hotspot records.
\end{itemize}

\two{These metrics create a controllable contention profile that allows us to systematically vary contention---unlike prior work, which often relies on fixed contention models~\cite{guo2021bamboo, wang2016ics3}.}

\subsection{Performance Comparison}
\begin{figure}[t]
    \centering
    \begin{subfigure}[b]{0.235\textwidth}
        \centering
        \includegraphics[width=\linewidth]{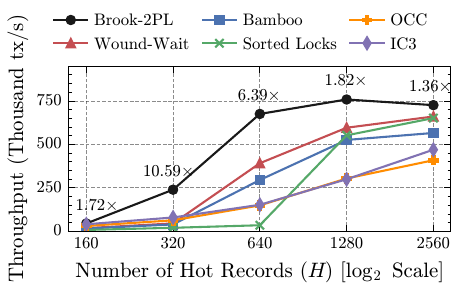}
        \caption{\re{\small Store Wrokload}}
        \label{fig:performance_store_varyingH}
    \end{subfigure}
    \hfill
    \begin{subfigure}[b]{0.235\textwidth}
        \centering
        \includegraphics[width=\linewidth]{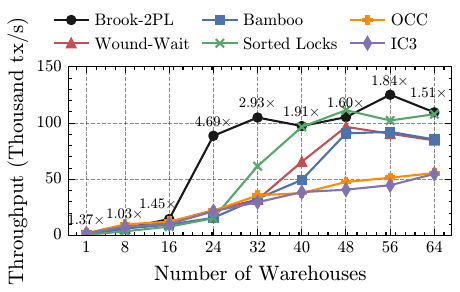}
        \caption{\re{TPC-C Workload}}
        \label{fig:performance_tpcc_varyingWarehouse}
    \end{subfigure}
     \vspace{-1\baselineskip}
    \caption{\re{\small \small Performance comparison while varying the number of hot records with \( P_{HotRecord} = 100\% \) }}
     \vspace{0.1\baselineskip}
     
    \label{fig:performance_varyingProb}
\end{figure}

\vspace{-2pt}
\cat{Varying Number of Hot Records (\( H \))}
We conducted experiments on the store workload by varying the number of hot records (\( H \)) to control the level of contention. In this experiment, the probability of selecting a hot record (\( P_{HotRecord} \)) is set to \( 100\% \). 
Figure~\ref{fig:performance_store_varyingH} illustrates the performance of the system using different 2PL protocols to execute two transactions: \textit{Add Listing Transaction} (Algorithm~\ref{alg:add_listing}) and \textit{Buy Listing Transaction} (Algorithm~\ref{alg:buy_listing}) in equal proportions.
Across all hotspot configurations, \ourSystem\ consistently outperforms competing protocols. 
\re{\ourSystem\ achieves an average speed-up in transaction processing of \( 4.3\times \) compared to others.}

\cat{Varying Number of Warehouses}  
In this experiment on the TPC-C workload, all warehouses are treated as hot records (\(P_{HotRecord} = 100\%\)), and the contention is increased by varying the number of warehouses (\(H\)). Figure~\ref{fig:performance_tpcc_varyingWarehouse} shows that under high contention (fewer warehouses---$H \leq 16$), all methods experience low performance, with \ourSystem\ outperforming both Bamboo and Wound-Wait. \re{As the number of warehouses exceeds $24$, \ourSystem\ approaches the system’s maximum capacity, continuing to outperform other retrial-based approaches, as many of their transactions are aborted.  For workloads with $H \geq 40$ and lower contention, our approach performs comparably to Sorted Locks.}  
This figure highlights the average speed-up of \ourSystem\ over Bamboo and Wound-Wait across different warehouse counts.


\begin{figure}[t]
    \centering
    \begin{subfigure}[b]{0.235\textwidth}
        \centering
        \includegraphics[width=\textwidth]{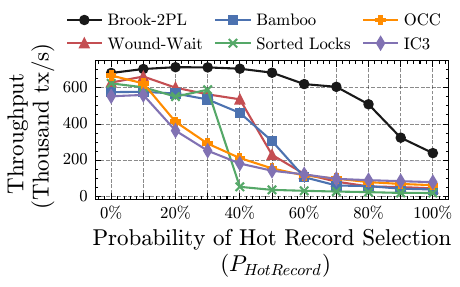} 
        \caption{\re{Store Workload ($H=320$)}}
        \label{fig:performance_store_varyingProb}
    \end{subfigure}
    \hfill
    \begin{subfigure}[b]{0.235\textwidth}
        \centering
        \includegraphics[width=\textwidth]{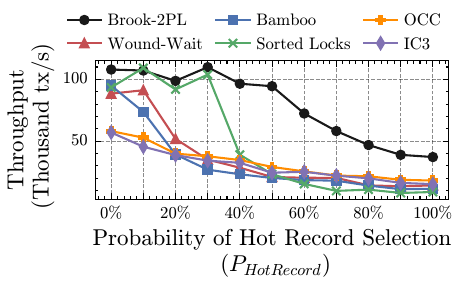} 
        \caption{\re{TPC-C Workload ($H=16$)}}
        \label{fig:performance_tpcc_varyingProb}
    \end{subfigure}
    \vspace{-1\baselineskip}
    \caption{\re{\small Performance comparison while varying the probability of hot record selection with a fixed number of hot records}}
    \vspace{-0.5\baselineskip}
    \label{fig:performance_varyingProb}
\end{figure}
\cat{Varying Probability of Hot Record Selection (\(P_{HotRecord}\))}
This experiment adjusts the contention of the workload by varying \(P_{HotRecord}\) while keeping the number of hot records (\(H\)) fixed.
Figure~\ref{fig:performance_store_varyingProb} shows the performance of comparing 2PL protocols as \(P_{HotRecord}\) varies from \(0\%\) (no contention) to \(100\%\) (all records are selected from hot records) for the \textit{Add Listing} and \textit{Buy Listing} transactions. In this experiment, \(320\) \textit{Item} records and their corresponding \textit{Players} are selected as hotspot records.

The results show that when there is no contention, all protocols perform similarly. 
Bamboo performs slightly worse than Wound-Wait in this scenario due to its additional overhead of tracking dirty reads. 
\re{IC3 does not provide any performance gain over OCC, as its static analysis of the Store workload cannot chop transactions into multiple hops due to deadlocks (each transaction is combined into a single hop). Additionally, IC3 incurs overhead from writing to a local stash for each hop and introduces extra waiting, making it less efficient.}
As contention increases, the performance of other protocols begins to decline, whereas \ourSystem\ maintains stable performance under lower levels of contention.
\re{For \(P_{HotRecord} < 40\%\), the throughput of the Sorted Locks approach does not drop significantly. However, for higher contention levels, performance drops drastically due to long waiting times. A simple sorting approach can create bottlenecks under high contention by forcing all transactions to acquire the same hot locks first. Also, moving all lock acquisitions to the beginning of the transaction forces locks to be held for a longer duration, reducing concurrency.}

\begin{figure}[t]
    \centering

    \begin{subfigure}[b]{0.235\textwidth}
        \centering
        \includegraphics[width=\textwidth]{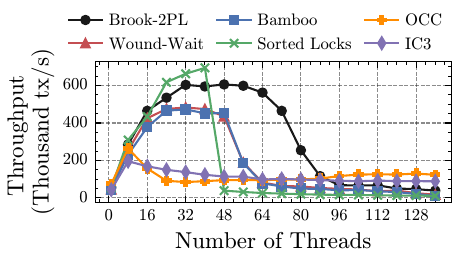} 
        \caption{\re{\scriptsize Store [\( P_{HotRecord} = 70\% \) \(, H=320 \)]}}
        \label{fig:performance_varyingThreads_store}
    \end{subfigure}
    \hfill
    \begin{subfigure}[b]{0.235\textwidth}
        \centering
        \includegraphics[width=\textwidth]{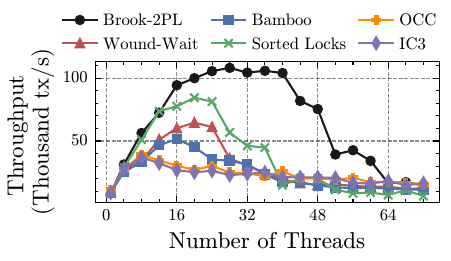} 
        \caption{\re{\scriptsize TPC-C [\( P_{HotRecord} = 100\% \) \(, H=16 \)]}}
        \label{fig:performance_varyingThreads_tpcc}
    \end{subfigure}
    \caption{\re{Performance comparison while varying the number of threads with fixed $H$ and $P_{HotRecord}$}}
    \label{fig:readonly}
\end{figure}

Interestingly, when the probability of hot records is low (\(P_{{HotRecord}} < 50\%\)), \ourSystem\ shows a slight performance improvement. This is because hot records fit within the cache, resulting in a higher cache hit rate and reduced access latency.
\ourSystem\ tolerates contention up to \(50\%\) without a performance drop. 
Overall, \ourSystem\ shows less performance degradation than the comparison methods. In this experiment, \ourSystem\ achieves an average \re{\(5.2\times\) speed‑up} over the other methods.

A similar experiment is conducted on the TPC-C workload (Figure~\ref{fig:performance_tpcc_varyingProb}). In this experiment, the database contains \(64\) warehouses, with \(16\) warehouses selected as hot records (\(H = 16\)). Similar to the experiments on the store workload, \ourSystem\ outperforms the comparison methods and shows lower performance degradation. 
\re{The throughput of the Sorted Locks approach declines sharply under high contention  (\(P_{{HotRecord}} > 30\%\)), whereas it remains comparable to \ourSystem\ at lower contention levels.}
Bamboo and Wound‑Wait perform similarly under high contention, since Bamboo’s lock retirement mechanism only retires exclusive locks and does not provide significant performance gains in the TPC-C workload.
\re{OCC and IC3 consistently underperform compared to our method across all contention levels, as many transactions abort due to conflicts. Even with \(P_{{HotRecord}} = 0\%\), conflicts exist between transactions in the TPC-C workload. OCC-based approaches are generally unsuitable for high-contention workloads due to their high abort rates.}
\ourSystem\ achieves an average speed-up of \re{\(2.86\times\)}.

\begin{figure}[t]
    \centering
    \includegraphics[width=0.8\linewidth]{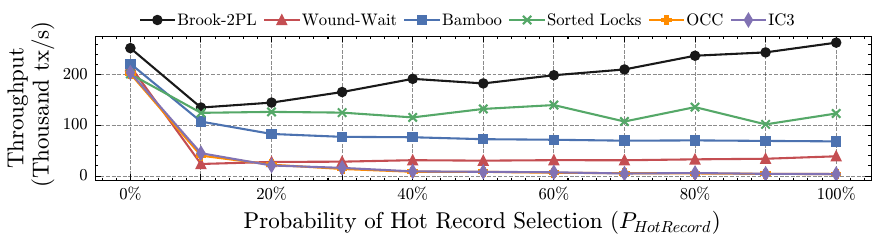} 
    \vspace{-1\baselineskip}
    \caption{\re{Performance comparison of read-only transactions while varying the probability of hot record selection ($H = 320$)}}
    \label{fig:readonly}
\end{figure}

\cat{Varying Number of Threads}  
This experiment evaluates scalability by increasing the number of threads under a high contention setting.
Figure~\ref{fig:performance_varyingThreads_store} shows the performance for the store workload with high contention (\(P_{HotRecord} = 70\%\), \(H = 320\)). While all methods perform similarly at lower thread counts, Bamboo, Wound-Wait, \re{and Sorted Locks} suffer performance drops after $48$ threads.
\re{OCC and IC3 fail to scale under highly contended workloads, reaching their peak performance with only $8$ threads.}
\ourSystem\ maintains stable and higher throughput, showcasing better scalability under high contention.
Figure~\ref{fig:performance_varyingThreads_tpcc} shows results for TPC-C with high contention (\(P_{HotRecord} = 100\%\), \(H = 16\)). As the number of threads increases, all methods initially improve in throughput, \re{but OCC and IC3 degrade beyond $8$ threads, Bamboo and Wound-Wait beyond $16$ threads, and Sorted Locks after $20$ threads.} In contrast, \ourSystem\ continues to scale up to $40$ threads, sustaining higher throughput and maintaining scalability. 

\begin{figure}[t]
    \centering
    \begin{subfigure}[b]{0.245\textwidth}
        \centering
        \includegraphics[width=\textwidth]{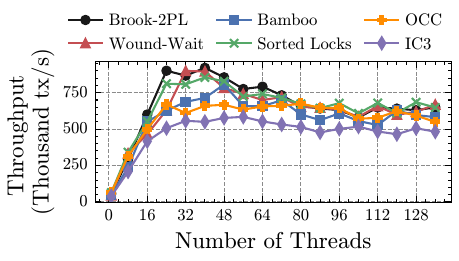} 
        \caption{\two{\scriptsize Store [\( P_{HotRecord} =0\% \) \(, H=320 \)]}}
        \label{fig:performance_varyingThreads_0p_store}
    \end{subfigure}
    \hfill
    \begin{subfigure}[b]{0.22\textwidth}
        \centering
        \includegraphics[width=\textwidth]{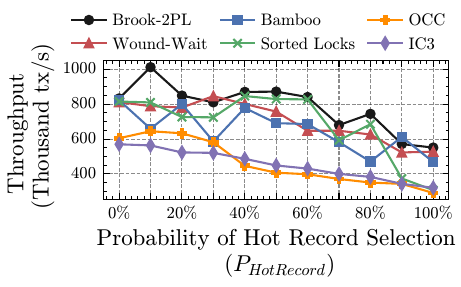} 
        \caption{\two{\scriptsize Store [\( 32 \) Threads \(, H=320 \)]}}
        \label{fig:performance_store_varyingProb_32threads}
    \end{subfigure}
    \caption{\two{Performance comparison of protocols with lower levels of contention}}
    \vspace{-5pt}
    \label{fig:thread_count}
\end{figure}

\vspace{-1.5pt}
\cat{Read-Only Transactions}  
To evaluate the effect of early release of locks on transactions, we execute the \textit{Read Items} transaction--which reads $20$ hot items in the store workload--in \(20\) additional threads alongside the \textit{Add Listing} and \textit{Buy Listing} transactions. 

Figure~\ref{fig:readonly} shows the performance of read-only transactions under varying \(P_{HotRecord}\).
\ourSystem\ demonstrates higher throughput than the other methods across all values of \(P_{HotRecord}\). Under no contention, all methods perform similarly. However, Bamboo's and Sorted Locks throughput drops after \(P_{HotRecord} = 10\%\), leveling off near $80\text{k}$ and $120\text{k}$ transactions/sec respectively beyond \(P_{HotRecord} = 20\%\). In contrast, \ourSystem\ maintains stable throughput, with at most a \(53\%\) drop in the worst case.

\ourSystem\ demonstrates higher throughput than the other methods across all values of \(P_{HotRecord}\). Under no contention, all methods perform similarly. However, \re{Bamboo's and \re{Sorted Locks'} throughput drops after \(P_{HotRecord} = 10\%\), leveling off near $80\text{k}$ and $120\text{k}$ transactions/sec, respectively, beyond \(P_{HotRecord} = 20\%\).} In contrast, \ourSystem\ maintains stable throughput, with at most a \(53\%\) drop in the worst case.
\ourSystem's throughput increases under higher contention, as more hot records are cached, reducing the average data retrieval time from the in-memory database. 
\re{The throughput of other methods remains close to zero under high contention, as they lack mechanisms to support long read-only transactions and suffer from high abort rates.}

\vspace{-1.5pt}
\cat{Choice of 64 Threads}
\two{We selected 64 threads for our experiments to (1) saturate all cores and (2) match the thread-to-warehouse ratio in TPC-C, following standard practice.
Figure~\ref{fig:performance_varyingThreads_0p_store} confirms that under $0\%$ hot record probability, all protocols scale well to $64$ threads without thrashing. 
Figure~\ref{fig:performance_store_varyingProb_32threads} shows the performance of protocols under varying \(P_{HotRecord}\) with $32$ threads. While trends remain similar, performance gaps are less pronounced, supporting the need for higher thread counts to generate sufficient contention and effectively differentiate protocol performance.
}

\begin{figure}[t]
    \centering
    \begin{subfigure}[b]{0.235\textwidth}
        \centering
        \includegraphics[width=\textwidth]{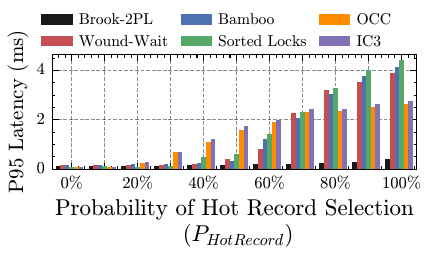} 
        \caption{Store Workload ($H=320$)}
        \label{fig:latency_store}
    \end{subfigure}
    \hfill
    \begin{subfigure}[b]{0.235\textwidth}
        \centering
        \includegraphics[width=\textwidth]{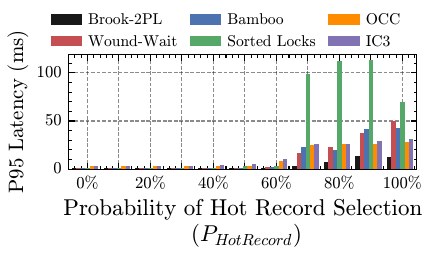} 
        \caption{TPC-C Workload ($H=16$)}
        \label{fig:latency_tpcc}
    \end{subfigure}
        \vspace{-12pt}
        \caption{\re{Tail Latency comparison while varying the probability of hot record selection with a fixed $H$}}
        \vspace{-5pt}
    \label{fig:latency}
\end{figure}

\vspace{-1.5pt}
\subsection{Tail Latency Comparison}
Figure~\ref{fig:latency} shows the tail latency (p95) of transactions as \(P_{HotRecord}\) varies. Latency is measured from the time a transaction is submitted to the system until it is committed. For protocols with aborts, transactions are retried immediately after being aborted. In contrast, \ourSystem\ experiences no aborts due to deadlocks, and each transaction is executed only once.

Figure~\ref{fig:latency_store} shows the p95 latency of transactions when \(H = 320\) in the store workload. \ourSystem\ consistently achieves lower tail latency. Bamboo suffers from cascading aborts, resulting in higher latency than Wound-Wait across most contention levels. 
\re{IC3 and OCC achieve similar latency; however, IC3 has slightly higher tail latency due to additional overhead. The Sorted Locks approach achieves the lowest latency under low contention but experiences an increase in latency under high contention, ultimately performing the worst.}
On average, \ourSystem\ reduces p95 latency \re{by \(56\%\) compared to other baselines}.

Figure~\ref{fig:latency_tpcc} shows a similar experiment for the TPC-C workload with \(H = 16\) out of \(64\) warehouses. Tail latency for other protocols increases rapidly due to high abort rates; both exhibit similar latency patterns under this setup. In contrast, \ourSystem\ provides significantly better p95 latency as contention increases.
\re{On average, \ourSystem\ reduces tail latency (p95) by \(48\%\) compared to other protocols in the TPC-C workload.}

\begin{figure}[t]
    \centering
    \includegraphics[width=\linewidth]{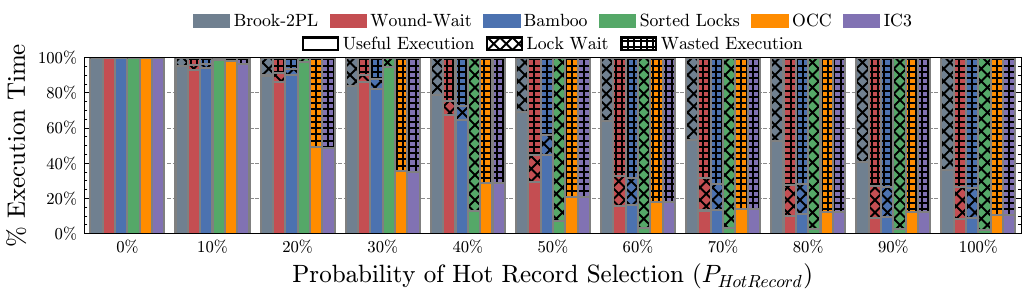}
    \vspace{-2.5em}
    \caption{\re{CPU time breakdown--store workload (\(H = 320\))}}
    \vspace{-1.5pt}
    \label{fig:work-prportion}
\end{figure}

\vspace{-3pt}
\subsection{Execution Time Breakdown}

Figure~\ref{fig:work-prportion} shows the breakdown of CPU time—useful execution, lock wait, and wasted execution—for the \textit{Add Listing} and \textit{Buy Listing} transactions in the store workload as \(P_{HotRecord}\) increases. Useful execution represents the proportion of transaction execution time spent performing operations, lock wait indicates the proportion of time spent waiting for locks, and wasted execution corresponds to the proportion of transaction work wasted due to aborts.

\ourSystem\ experiences no wasted execution because no transactions are aborted. As contention increases, the proportion of lock wait for \ourSystem\ increases; however, it remains lower than the wasted execution observed in the comparison methods. 
\re{The lock wait time in the Sorted Locks approach increases drastically under high contention, consuming nearly the entire execution time.}
Both Bamboo and Wound-Wait experience significant wasted execution in addition to lock wait.
\re{IC3 and OCC experience significantly increased wasted execution time as contention increases.}
In particular, on average, \ourSystem\ spends \(70\%\) of the execution time performing useful work, \re{while other methods spend \(41\%\).} 


\begin{figure}[t]
    \centering
    \begin{subfigure}[b]{0.24\textwidth}
        \centering
        \includegraphics[width=\textwidth]{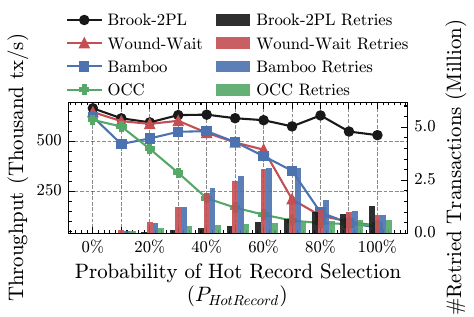} 
        \caption{$10\%$ Dynamic Transction}
        \label{fig:performance_store_dynamic}
    \end{subfigure}
    \hfill
    \begin{subfigure}[b]{0.23\textwidth}
        \centering
        \includegraphics[width=\textwidth]{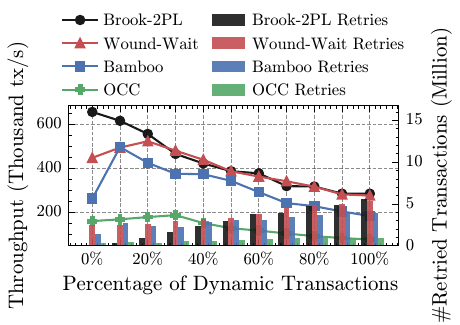} 
        \caption{$P_{HotRecord} = 50\%$}
        \label{fig:performance_store_dynamic_percentage}
    \end{subfigure}
    \vspace{-1\baselineskip}
    \caption{\re{Performance and Retry Analysis with Dynamic Transactions enabled in the Store Workload}}
    \label{fig:performance_dynamic}
\end{figure}

\vspace{-5pt}
\subsection{Dynamic Transactions}  
To evaluate dynamic transactions, we enable dynamic transaction support in \ourSystem\ and introduce transactions that perform \(10\) read–modify–write operations. 
Dynamic transaction inputs are selected from database records, with each input drawn from hot records with probability~\(P_{{HotRecord}}\).
\re{IC3 and Sorted Locks are excluded as they do not support dynamic transactions.}

Figure~\ref{fig:performance_store_dynamic} shows the throughput (left axis) and the number of transactions being retried (right axis) as \(P_{HotRecord}\) varies for both the store workloads when $10\%$ transactions are dynamic and the rest are from previous static transactions.
\ourSystem\ maintains higher throughput across all contention levels. Although dynamic transactions introduce retries, \ourSystem\ shows significantly fewer retrials. Bamboo and Wound-Wait suffer from increases in retried transactions, particularly beyond \(P_{HotRecord} = 40\%\), leading to noticeable throughput degradation. Beyond \(P_{HotRecord} = 60\%\),  retry counts decrease not because performance improves but because excessive waiting times prevent many transactions from attempting a retry.
\re{For OCC, long dynamic transactions cause most transactions in the system to be continuously retried, significantly reducing overall performance.
}
%

In Figure~\ref{fig:performance_store_dynamic_percentage}, we vary the share of dynamic transactions from \(0\%\) to \(100\%\) with \(P_{{HotRecord}}=50\%\).
When only static transactions are present, throughput of \ourSystem\ outperforms competing methods and experiences no retries, since it never aborts static transactions.
Introducing $10\%$ dynamic transactions increases workload diversity and introduces more transaction types, giving Wound‑Wait and Bamboo a throughput boost. 
However, as the share of dynamic transactions increases higher than $10\%$, throughput steadily declines for all methods. 
\ourSystem\ maintains its advantage up to $50\%$ dynamic transactions, beyond which its throughput converges with Wound‑Wait’s. Bamboo consistently has lower throughput, because it immediately releases the lock after every write in dynamic transactions---without knowing whether a write is the last update---resulting in extra overhead.
\re{OCC experiences low throughput across all levels of dynamic transactions due to a high abort rate.}
\ourSystem\ achieves an average performance improvement of \re{$95\%$} on static transactions and \re{$84\%$} on dynamic transactions.
\ourSystem\ experiences fewer transaction retries until dynamic transactions reach $60\%$, after which its retry count is comparable to other methods. Overall, \ourSystem\ has $21\%$ fewer transactions retried than Bamboo and $33\%$ fewer than Wound‑Wait.
\ourSystem\ benefits from a higher percentage of static contentious transactions. Dynamic transactions that become highly contentious within the workload are targeted for static analysis. 

\begin{figure*}[t]
    \centering
    \includegraphics[width=0.85\textwidth]{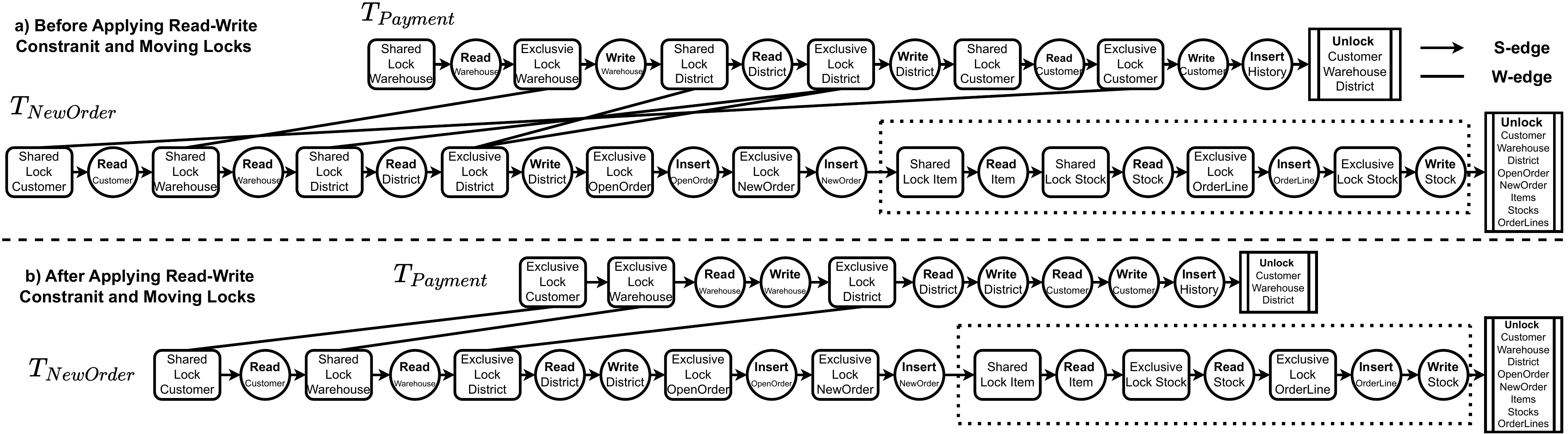}
    
    \caption{\graph\ representation of TPC-C workload transactions: (a) \graph\ without applying the read-write constraint or moving locks earlier; (b) \graph\ after applying the read-write constraint and moving locks earlier.
}
    \label{fig:tpcc_slw}
\end{figure*}
\vspace{-2pt}
\section{Related Work}
\label{sec:Related_work}
\vspace{-2pt}
\cat{Deadlock and Abort Freedom}
Many research efforts have aimed to develop deadlock-free concurrency control protocols, often using deadlock-prevention mechanisms. They inevitably abort and retry conflicting transactions to break dependency cycles~\cite{agrawal1995ordered, guo2021bamboo, bernstein1987concurrency, kimura2012efficientLocking, tu2013speedy, zheng2014fast, levandoski2015multi, lim2017cicada, mu2014extracting, yuan2016bcc}. 
However, these approaches may unnecessarily abort transactions, reducing concurrency and potentially creating a cascading burden on the system.

Deterministic systems, instead of applying concurrency control to individual transactions, process transactions in batches and execute them according to pre-generated schedules. These systems process transactions in a serial order based on their logged read and write sets, which are predetermined to generate optimal solutions. They ensure serializable execution while avoiding concurrency-control-related aborts~\cite{qin2021caracal, faleiro2017earlyWriteVisibility, cowling2012granola}.
QueCC~\cite{qadah2018quecc} proposes a mechanism to avoid transaction aborts using a priority queue approach to dispatch transactions into different partitions. Similarly, Calvin and Bohm~\cite{faleiro2014bohm, thomson2012calvin} utilize knowledge of transaction conflicts to relax the total order into an equivalent partial order, thereby reducing transaction aborts. Related ideas are applied in other deterministic transaction systems~\cite{ding2018improving, narula2014phase, prasaad2020handling}. The concept of analyzing transactions is extended in \ourSystem\ through static analysis of transaction types. However, in our system, this analysis is applied to transaction types rather than batches of predefined transactions, making the scope of the application distinct from previous approaches.

ORTHRUS~\cite{ren2016deadlockfree} is a database system that ensures deadlock avoidance through planned data access. ORTHRUS achieves deadlock-freedom by utilizing the lexicographical ordering of lock acquisition. In contrast, \ourSystem\ achieves deadlock-freedom through a dynamic ordering of lock acquisition tailored to the transactions. \ourSystem\ modifies the locking order only when necessary.

\four{
Aria~\cite{lu2020aria} is an optimistic, batch-oriented system that avoids requiring a priori knowledge of transaction read/write sets by performing its analysis online after transactions have executed; conflicts are resolved in a commit phase, and transactions may abort as needed.
In contrast, \ourSystem\ performs static analysis on transaction logic before execution, without requiring runtime information, enabling deadlock-free and abort-free 2PL.
Aria reorders commit order by transforming RAW to WAR dependencies to minimize aborts. Under high contention, Aria falls back to a Calvin-like protocol using the now-known read/write sets.  In contrast, \ourSystem\ reorders lock acquisitions in advance to break dependency cycles.
Other transaction scheduling techniques have been proposed to address the problem of contention, either by assigning a schedule based on arrival order~\cite{bernstein1983multiversion, cheng2024towards} or by reordering the schedule after transaction commit and/or abort~\cite{burke2023morty, dong2023fine}.
}

\cat{Early Write Visibility}
The authors in~\cite{faleiro2017earlyWriteVisibility} propose PWV, which breaks transactions into multiple atomic pieces and builds a dependency graph to schedule and commit transactions with finer granularity. Similar ideas are applied in various other works in the context of deterministic databases~\cite{prasaad2020handling, faleiro2014bohm, ding2018improving}, where transactions are executed in finer-grained pieces. 
These protocols rely on the assumption of determinism, where transaction execution is ordered prior to execution. In contrast, \ourSystem\ does not rely on the assumption of determinism.

Previous research has proposed enabling dirty reads of writes~\cite{reddy2004speculative, guo2021bamboo, gupta1997revisiting, jones2010low, sadoghi2014reducing}. Authors in~\cite{reddy2004speculative} propose an early write visibility discipline where a transaction executes with both the pre-image and after-image of reading uncommitted data. In~\cite{guo2021bamboo}, the authors propose Bamboo, which enables early reads of uncommitted writes by keeping track of dirty reads. In these approaches, the problem of cascading aborts persists, and additional runtime overhead is introduced to support early write visibility.

Transaction chopping techniques, such as those proposed in~\cite{shasha1995transactionChopping, zhang2013transactionChain, xie2014salt, wang2016ics3}, aim to decompose transactions into smaller subtransactions while preserving serializability. However, these methods often suffer from rigid conditions, making it challenging to achieve coarse-grained decomposition. Recent advancements like Lynx~\cite{zhang2013transactionChain} and Salt~\cite{xie2014salt} attempt to address this by leveraging application semantics to reduce conflicts among subtransactions. Despite these improvements, they lack a dynamic mechanism to relax transaction chopping conditions. IC3~\cite{wang2016ics3}, which introduces early write visibility for transactions through optimistic concurrency control (OCC), inherits the general limitations of transaction chopping. Additionally, IC3 and other OCC approaches~\cite{wang2016mocc, mahmoud2014maat} are not suitable for high-contention workloads due to the high abort rates inherent to them. \ourSystem\ introduces partial transaction chopping, which allows flexibility to support more workloads and targets high-contention workloads.

\cat{Hybrid Approaches}
Prior work has explored combining locking and OCC. For instance, Static and Dynamic OCC~\cite{yu1992analysis} switch from OCC to locking after the first retry. MOCC~\cite{wang2016mocc} and Hsync~\cite{shang2016hsync} are other examples that propose hybrid 2PL–OCC protocols. ACC~\cite{tang2017acc} goes further by adaptively selecting from a broader range of protocols. 
CCaaLF~\cite{pan2025ccaalf} learns an agent function, implemented as an efficient in-database lookup table, that maps database states to concurrency control actions.
However, these methods are generally designed for dynamic workloads and do not directly address hotspots, as they typically delay making writes visible after the transaction is committed.

\section{Conclusion}
\label{sec:Conclusion}
In this paper, we presented \ourSystem, a deadlock-free 2PL protocol designed to tolerate high-contention workloads. 
\ourSystem\ statically analyzes transactions using the \graph\ representation and applies lock manipulation techniques, such as moving locks earlier or combining locking nodes, to eliminate deadlocks. 
To further reduce contention, \ourSystem\ introduces a novel partial transaction chopping mechanism that enables the early release of locks, enhancing transaction concurrency.
Our evaluation results show that \ourSystem\ outperforms state-of-the-art concurrency control protocols, achieving a performance improvement of \re{\(2.86\times\)} and reducing tail latency by \re{\(48\%\)} in the TPC-C workload.
%

\vspace{-5pt}
\section{Acknowledgement}
This research is partly supported by a gift from Roblox and the NSF under grants 2321121, CNS1815212, and SaTC-2245372.
\appendix

\section{TPC-C Workload Analysis}
\label{appendix:tpcc}

\vspace{-5pt}
\cat{Static Analysis}
Figure~\ref{fig:tpcc_slw} illustrates the \graph\ representation for \textit{New Order} and \textit{Payment} transactions. 
After applying the read-write constraint to the \graph, potential deadlocks are eliminated by moving the Customer lock earlier in the \textit{Payment} transaction.

\begin{figure}[t]
    \centering
    \includegraphics[width=0.7\linewidth]{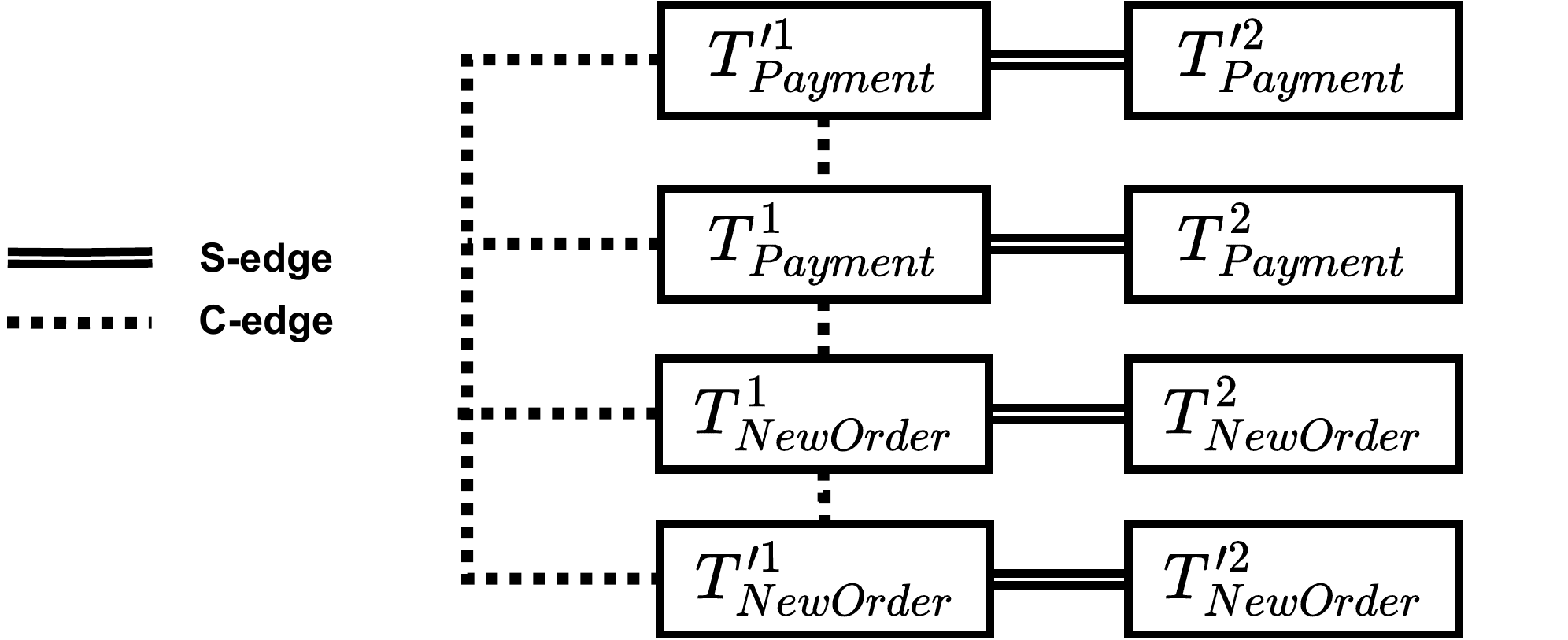}
    \caption{SC-Graph representation of the TPC-C workload after the early release of the \textit{District} and \textit{Warehouse} locks in the \textit{Payment} transaction and the \textit{Customer}, \textit{District}, and \textit{Warehouse} locks in the \textit{New Order} transaction}
    \vspace{-5pt}
    \label{fig:tpcc_sc}
\end{figure}

\cat{Partial Transaction Chopping}
\ourSystem's static analysis releases the locks on \textit{District} and \textit{Warehouse} after the last write on \textit{District} in the \textit{Payment} transaction, resulting in two subtransactions. 
Similarly, in the \textit{New Order} transaction, the locks on \textit{Customer}, \textit{District}, and \textit{Warehouse} are released after the last write on \textit{District}, by partially chopping the transaction into two subtransactions. To remove self-conflicts between the second subtransaction of the \textit{New Order} transaction, the \textit{Stock} lock is moved to the first subtransaction.
Our partial transaction chopping method, unlike traditional transaction chopping, enables the creation of an SC-acyclic SC-Graph for TPC-C transactions, thereby reducing contention.
Figure~\ref{fig:tpcc_sc} illustrates the SC-Graph for these choppings after applying early lock release and \textit{Stock} lock movement. This SC-Graph does not contain any SC-cycles, ensuring that the final execution is serializable.

\newpage
\bibliographystyle{ACM-Reference-Format}
\bibliography{sample-base}


\end{document}